\newcommand{\BG}{\textsc{BuildMST}\xspace}
\newcommand{\DG}{\textsc{MST}\xspace}
\title{On Underlay-Aware Self-Stabilizing Overlay Networks}
\author{Thorsten Götte$^{(\text{\Letter})}$}{Department of Computer Science, Paderborn University, Paderborn, Germany}{thgoette@mail.upb.de}{https://orcid.org/0000-0001-9798-6993}{}
\author{Christian Scheideler}{Department of Computer Science, Paderborn University, Paderborn, Germany}{scheidel@mail.upb.de}{}{}
\author{Alexander Setzer}{Department of Computer Science, Paderborn University, Paderborn, Germany}{asetzer@mail.upb.de}{}{}
\authorrunning{T. Götte, C. Scheideler and A. Setzer} 
\keywords{Topological Self-stabilization, Overlay networks, Minimum Spanning Tree}
\subjclass{Theory of computation $\to$ Distributed algorithms; }
\begin{document}

\maketitle

\begin{abstract}
  We present a self-stabilizing protocol for an overlay network that constructs the Minimum
  Spanning Tree (MST) for an underlay that is modeled by a weighted tree.
    The weight of an overlay edge between two nodes is the weighted length of their shortest path in the tree.
    We rigorously prove that our protocol works correctly under asynchronous and non-FIFO message delivery. Further, the
  protocol stabilizes after $\mathcal{O}(N^2)$ asynchronous rounds where $N$ is the number of nodes in the overlay.
\end{abstract}

\section{Introduction}
The Internet is perhaps the world's most popular medium to exchange any kind of information.
Common examples are streaming platforms, file sharing services or social media
networks. Such applications are often maintained by overlay networks, called overlays for short.
An overlay is a computer network that is built atop another network, the so-called underlay.
In an overlay, nodes that may not be directly connected in the underlay
can create virtual links and exchange messages if they know each others' addresses.
The resulting links then represent a path in the underlying network,
perhaps through several links.

With increasing size of the network, there are several obstacles in designing these overlays.
First of all, errors such as node or link failures are inevitable.
Thus, there is a need for protocols that let the system recover from these faults.
This can be achieved through self-stabilization, which describes a system's ability to reach a desired state from \textsl{any} initial configuration.
Since its conception by Edsger W. Dijkstra in 1975, self-stabilization has proven to be a suitable paradigm to build resilient and scalable overlays
that can quickly recover from changes.
There is a plethora of self-stabilizing protocols for the formation and maintenance of overlay networks with a specific topology.
These topologies range from simple structures like line graphs and rings \cite{onus2007linearization} to more complex overlay networks with useful properties for distributed systems \cite{Feldmann,Richa2011,corona,Skip+}.
These overlays usually minimize the diameter while also maintaining a small node degree, usually at most logarithmic in the number of nodes.
However, the aforementioned overlay protocols are often not concerned with path lengths in the underlay.
This is remarkable, since for many use cases these path lengths
and the resulting latency are arguably more important than the diameter.

In this paper, we work towards closing this gap by proposing a self-stabilizing protocol that
forms and maintains an overlay that resembles the Minimum Spanning Tree (MST)
implied by the distances between nodes in the underlying network.
In particular, we model these distances as a tree metric,
i.e., as the length of the unique shortest path between two overlay nodes in a weighted
tree.
We chose this type of metric because one can find weighted trees in many areas of networking.
In the simplest case, the physical network that interconnects the overlay nodes resembles a tree.
This is often the case in data centers.
Here, the servers are the tree's leaves
while the switches
are the tree's intermediate nodes (cf. \cite{Arregoces03,Leiserson85,Al-Fares08}).
Therefore, we can define a tree metric directly on the paths in this physical infrastructure.
Of course, not all physical networks are strictly structured like trees, and may instead contain cycles.
However, for small networks there are practical protocols that explicitly
reduce the network graph to a tree for routing purposes \cite{ieee_802_1d,Perlman85}.
These protocols are executed directly on the network appliances and exclude certain physical
connections, such that the remaining connections form a spanning tree.
Thus, we can define a tree metric based this tree.
Last, in large-scale networks like the internet neither the physical network nor the routing paths
strictly resemble trees.
However, there is strong evidence that even these large-scale networks can be closely approximated by or embedded into weighted trees
by assigning them virtual coordinates (cf. \cite{Adcock13,Abu-Ata14,Montgolfier11,Shavitt08}).
Thus, we can define a tree metric based on the shortest paths in such an embedding.
In summary, tree metrics promise to be a versatile abstraction
for many kinds of real-world networks.


\subsection{Model \& Definitions}
\label{sec:model}
We consider a distributed system based on a fixed set of nodes $V$.
Each node $v \in V$ represents a computational unit, e.g., a computer, that possesses
a set of local variables and references to other nodes, e.g, their IP addresses.
These references are immutable and cannot be corrupted.
If clear from the context,
we refer to the reference of some node $w \in V$ simply as $w$.
Further, each node in $V$ has access to a \emph{tree metric} $d_T: V^2 \rightarrow \mathbb{R}^+$
that assigns a weight to each possible edge in the overlay.
In particular, the function $d_T$ returns the weighted length of the unique shortest path between two nodes in the weighted
tree $T := (V_T,E_T,f)$ with $f: V^2 \rightarrow \mathbb{R}^+$ and $V_T \supseteq V$.
A node $v \in V$ can check the distance $d_T(v,w)$
only if it has a reference to $w \in V$ in its local variables.
Furthermore, it can check the distance $d_T(u,w)$
of all nodes $u,w \in V$ in its local variables.
Throughout this paper, we refer to the metric space $(V,d_T)$
also as a tree metric for ease of description.

Sending a message from a node $u$ to another node $v$ in the overlay is only possible if $u$ has a reference to $v$.
All messages for a single node are stored in its so-called channel and we assume \emph{fair message receipt},
which means each message will \emph{eventually} be received.
In particular, we do \emph{not} assume FIFO-delivery, i.e., the messages may be received and processed in any order.

We assume that each node runs a protocol that can perform computations
on the node's local variables and send them within messages to other nodes.
To formalize the protocol's execution, we use the notion of \textsl{configurations}.
A configuration $c$ contains each node's internal state, i.e., its assignment of
values to its local variables, its stored references, and all messages in the node's channel.
We denote $C$ to be the set of all possible configurations.
Further, a \emph{computation} is an infinite series of configurations $(c_t, c_{t+1}, \dots)$,
such that $c_{i+1}$ is a succeeding configuration of $c_{i}$ for $i \geq t$ according to the protocol.
In each step from $c_i$ to $c_{i+1}$, the following happens:
One node $v \in V$ is activated and an arbitrary (possibly empty) set of messages from $v$'s channel is delivered to $v$.
Once activated, the node will execute its protocol and processes all messages delivered to it.
As we do not specify which node is activated and which messages get delivered,
there are maybe several possible succeeding configurations $c' \in C$ for any configuration $c$.
Last, we assume \emph{weakly fair execution}, which means that each node is eventually activated.
Other than that, we place no restriction on the activation order.

Given a subset $C' \subseteq C$, we say that the system \textit{reaches} $C'$ from $c_t$
if \textit{every} computation that starts in configuration $c_t$ eventually contains a configuration $c_{t'} \in C'$.
Note that this does not imply that any succeeding configuration of $c_{t'}$ is in $C'$ as well.

Based on this notion of configurations, we can now define self-stabilization.
A protocol is self-stabilizing concerning a set of legal configurations $L \subseteq C$
if starting from any initial configuration $c_0 \in C$ each computation will
eventually reach $L$ (Convergence) and every succeeding configuration is also in $L$ (Closure).
Formally:

\begin{definition}[Self-Stabilization]
    \label{def:self-stabilization}
    A protocol $\mathcal{P}$ is self-stabilizing if it fulfills the following two properties.
    \begin{enumerate}
        \item (Convergence) Let $c_0 \in C$ be \textsl{any} configuration.
        Then every computation that starts in $c_0$ will reach $L$ in finitely many steps.
        \item (Closure) Let $c_t \in L$ be \textsl{any} legal configuration.
        Then \textsl{every} succeeding configuration of $c_t$ is legal as well.
    \end{enumerate}
\end{definition}

Throughout this paper we distinguish between two kinds of edges in each configuration $c \in C$.
We call an edge $(v,w) \in V^2$ \emph{explicit} if and only if $v$ has a reference to $w$ stored in its local variables.
Otherwise, if the reference is in $v$'s channel, we call the edge \emph{implicit}.
Based on this definitions,
we define the directed graph $G_c := (V, E_c^X \cup E^T_c)$ where the set $E^X_c \subseteq V^2$ denotes the set of explicit edges and
$E_c^T \subseteq V^2$ denotes the set of implicit edges.
Further, the undirected graph $G_c^* := (V,E^*_c)$ arises from $G_c$
if we ignore all edges' direction and whether they are implicit or explicit.




\subsection{Our Contribution}
\label{sec:protocol_tree}

Our main contribution is \BG,
a self-stabilizing protocol that forms and maintains overlay representing the MSTs
of all connected components of $V$.
An MST is a set of edges that connects a set of nodes
and minimizes the sum of the edges' weights given by the
underlying metric.
Because of this minimality,
it can serve as a building block for more elaborate topologies.
Note that in our model it is not always possible to construct the MST of all nodes, even if it is unique.
To exemplify this, consider an initial configuration $c_0$ where $G^*_{c_0}$ is \emph{not} connected.
Then two nodes from two different connected components of $G^*_{c_0}$
can never communicate with each other and create edges because they cannot learn each other's reference.
This was remarked in \cite{corona}.
In this case, it is impossible to construct an MST for all nodes as no protocol can add the necessary edges.
Instead one can only construct the MST of all initially connected components, i.e., a Minimum Spanning Forest.

Formally an MST is defined as follows.
\begin{definition}[Minimum Spanning Tree]
    Let $G := (V,E)$ be a graph and $f: E \to \mathbb{R}^+$ a weight function, then the Minimum Spanning Tree $MST(G,f) \subseteq E$ is a set of edges, such that:
    \begin{enumerate}
        \item $(V,MST(G,f))$ is a connected graph, and
        \item $\sum_{e \in MST(G,f)} f(e)$ is minimum.
    \end{enumerate}
    For the special case of $E := V^2$, i.e., the MST over all possible edges, we write $MST(V,f)$ for short.
\end{definition}

In this paper, we will only consider metrics with distinct distances
for each pair of nodes.
Otherwise the MST may not be unique for a metric space $(V,d_T)$.
If we had edges with equal distances, we would need to employ some mechanism of tie-breaking,
e.g., via the nodes' identifiers.

In the following, we define the set $\mathcal{L}_{MST} \subset C$ of legal configurations for \BG.
We regard all configurations $c \in \mathcal{L}_{MST}$ as legal in which the explicit edges form the MST of each connected component in $c$.
Further, a legal configuration may contain arbitrarily many implicit edges as long as they are part of an MST.
Formally:

\begin{definition}[Legal Configurations $\mathcal{L}_{MST}$]
    \label{def:legal_rng}
    Let $(V,d_T)$ be a tree metric and $c \in C$ be a configuration.
    Further denote $G_1, \dots, G_k$ as the connected components of $G^*_{c}$.
    Then the set of legal configurations $\mathcal{L}_{MST}$ is defined by the following two conditions:
    \begin{enumerate}
        \item A configuration $c \in \mathcal{L}_{MST}$ contains an explicit edge $(v,w) \in E^X_{c} $ if and only if there is component $G_i:=(V_i,E_i)$ with $\{v,w\} \in MST(V_i,d_T)$.
        \item A configuration $c \in \mathcal{L}_{MST}$ contains an implicit edge $(v,w) \in E^T_{c} $ only if there is a component $G_i:=(V_i,E_i)$ with $\{v,w\} \in MST(V_i,d_T)$.
    \end{enumerate}
\end{definition}




\section{Related Work}
\label{sec:related-work}

There are several self-stabilizing protocols for constructing spanning trees in a fixed
communication graph, e.g., \cite{Blin09,Gheorge97,BlinSSS10,Higham01,Blin10,Korman}.
These works do not consider a model where nodes can create arbitrary overlay edges.
Instead, each node has a \emph{fixed} set of neighbors
and chooses a subset of these neighbors for the tree.
Furthermore, the communication graph in all these works is modeled as an arbitrary weighted graph instead of a tree.
The fastest protocol given in \cite{Blin10} constructs an MST in $\mathcal{O}(N^2)$ rounds where $N$ is the number of nodes.
Note that \cite{Korman} proves the existence of a protocol that converges in $\mathcal{O}(N)$
rounds but does not present and rigorously analyze an actual protocol.
As stated in the introduction, these protocols can be used in the underlying network to construct
a tree metric for our protocol.

In the area of topological self-stabilization of overlay networks,
there is a plethora of works
that consider different topologies like line graphs \cite{onus2007linearization}, De-Bruijn-Graphs \cite{Feldmann,Richa2011}, or Skip-Graphs \cite{Skip+,corona}.
Besides these results that do not take the underlying network into account,
there are also efforts to build a topology based on a given metric.
An interesting result in this area is a protocol for building the Delaunay Triangulation of two-dimensional metric space by Jacob et al. \cite{JACOB2012137}.
This work bears several similarities with ours.
In particular, the Delaunay Triangulation is a superset of the metric's MST and shares some of the properties we present in Section \ref{sec:preliminaries}.
Also their protocol $D_{STAB}$ is very similar to our protocol \BG.
Recently Gmyr et al.\ proposed a self-stabilizing protocol for constructing an overlay based on an arbitrary metric \cite{Gmyr2016}.
Instead of building a spanning tree, their goal is to build an overlay in which the distance between two nodes is exactly the distance in the underlying metric.
In particular, their algorithm is also applicable to a tree metric.
However, note that for a tree metric the number of edges in the resulting overlay
can be as high as $\Theta(N^2)$.

Last, there are several non-self-stabilizing approaches for creating underlay-aware overlays, e.g., \cite{Gross12,Rowstron2001,abraham2004land,plaxton1999}.
With their often-cited work in \cite{plaxton1999}, Plaxton et al.\
introduced these so-called location-aware overlays.
The authors present an overlay for an underlay modeled by a growth-bounded two-dimensional metric.
This means that the number of nodes within a fixed distance of a node
only grows by a factor of $\Delta \in \mathbb{R}^+$ when doubling the distance.
Their overlay has a polylogarithmic degree and the length of the routing paths
approximate the distances in the underlying metric by a polylogarithmic factor.
In \cite{abraham2004land} Abraham et al.\ extended on \cite{plaxton1999} and proposed an overlay for growth-bounded metrics where
the latter is reduced to a factor of $1+\epsilon$. Here, $\epsilon \in \mathbb{R}^+$ is a parameter that can be set to an arbitrarily small value.
The resulting overlay's degree depends on $\epsilon$ and is not analyzed in detail.

\section{Preliminaries}
\label{sec:preliminaries}
In this section, we present some useful properties of tree metrics and their MSTs
that will help us in designing and analyzing our protocol.
Therefore, we introduce the notion of \emph{relative neighbors}.
Two nodes $v,w \in V$ are relative neighbors with regard to a metric $d_T$ if there is no third node
that is closer to either of them,
i.e., it holds $\nexists u \in V: \left(d_T(u,v) < d_T(v,w)\right) \wedge \left(d_T(u,w) < d_T(v,w) \right)$.
Throughout this paper we write $u \prec (v,w)$ as shorthand for $\left(d_T(u,v) < d_T(v,w)\right) \wedge \left(d_T(u,w) < d_T(v,w)\right)$.
Relative neighbors have been defined and analyzed for a variety of metrics (cf. \cite{Toussaint80,Jaromczyk92,Supowit83}), but they
prove to be especially useful in the context of tree metrics.
In particular, they allow nodes to form and maintain an MST based on local criteria.
This fact is stated by the following lemma:
\begin{lemma}
\label{lemma:equivalence}
Let $(V,d_T)$ be a tree metric, then the following two statements hold:
\begin{enumerate}
    \item  $\{v,w\} \in MST(V,d_T) \Longrightarrow \nexists u \in V: \, u \prec (v,w)$
    \item  $\{v,w\} \not\in MST(V,d_T) \Longrightarrow \exists u \in V: \, \big( u \prec (v,w) \wedge \{v,u\} \in MST(V,d_T) \big)$
\end{enumerate}
\end{lemma}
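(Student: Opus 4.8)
The plan is to prove the two implications separately. The first is just the classical minimum-spanning-tree exchange property and uses no tree structure: suppose, for contradiction, that $\{v,w\}\in MST(V,d_T)$ while some $u$ has $u \prec (v,w)$. Deleting $\{v,w\}$ splits $MST(V,d_T)$ into a subtree containing $v$ and one containing $w$, and $u$ lies in exactly one of them. If $u$ is on the $w$-side, reconnecting the two subtrees via $\{v,u\}$ yields a spanning tree of strictly smaller total weight (since $d_T(v,u)<d_T(v,w)$); if $u$ is on the $v$-side, the edge $\{u,w\}$ does the same (since $d_T(u,w)<d_T(v,w)$). Either way we contradict the minimality of $MST(V,d_T)$.

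For the second implication I would induct on $\ell$, the length of the path $v = p_0, p_1,\dots,p_\ell = w$ in $MST(V,d_T)$ (unique, as all distances differ); here $\ell\ge 2$ because $\{v,w\}\notin MST(V,d_T)$. Adding $\{v,w\}$ to $MST(V,d_T)$ closes a cycle on which, by the exchange property, $\{v,w\}$ is strictly heaviest, so every path edge has weight $< d_T(v,w)$; in particular $d_T(v,p_1) < d_T(v,w)$ and $d_T(p_{\ell-1},w) < d_T(v,w)$. If also $d_T(p_1,w) < d_T(v,w)$, then $p_1 \prec (v,w)$ with $\{v,p_1\}\in MST(V,d_T)$ and $u := p_1$ works; this always holds when $\ell = 2$, which is the base case.

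The hard case, and the main obstacle, is $d_T(p_1,w) > d_T(v,w)$, which forces $\ell\ge 3$. Put $j := \min\{\,i\ge 1 : d_T(p_i,w) < d_T(v,w)\,\}$, so $2\le j\le \ell-1$. Here I would invoke the four-point condition of tree metrics on each quadruple $\{v,p_i,p_{i+1},w\}$, $1\le i<j$: since $d_T(p_i,w) > d_T(v,w)$ and $d_T(v,p_{i+1}) > d_T(p_i,p_{i+1})$ (the exchange property applied to $\{v,p_{i+1}\}\notin MST(V,d_T)$), the sum $d_T(v,p_{i+1})+d_T(p_i,w)$ strictly exceeds $d_T(v,w)+d_T(p_i,p_{i+1})$, which by the four-point condition forces $d_T(v,p_{i+1})+d_T(p_i,w) = d_T(v,p_i)+d_T(p_{i+1},w)$. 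Telescoping over $i=1,\dots,j-1$ makes $d_T(p_i,w)-d_T(v,p_i)$ constant, equal to $\Delta := d_T(p_1,w)-d_T(v,p_1) > 0$; hence $d_T(v,p_j) = d_T(p_j,w)-\Delta < d_T(v,w)$, and with $d_T(p_j,w) < d_T(v,w)$ this gives $p_j \prec (v,w)$.

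To close, observe $\{v,p_j\}\notin MST(V,d_T)$ (as $j\ge 2$) and that the $MST(V,d_T)$-path from $v$ to $p_j$ is the subpath $p_0,\dots,p_j$, of length $j<\ell$; the induction hypothesis gives $u$ with $u \prec (v,p_j)$ and $\{v,u\}\in MST(V,d_T)$. It remains to lift $u$ to a witness for $(v,w)$: $d_T(v,u) < d_T(v,p_j) < d_T(v,w)$ is immediate, and $d_T(u,w) < d_T(v,w)$ follows from the four-point condition on $\{u,v,p_j,w\}$, which yields $d_T(u,w)+d_T(v,p_j) \le \max\!\big(d_T(u,v)+d_T(p_j,w),\, d_T(u,p_j)+d_T(v,w)\big)$, together with $d_T(u,v) < d_T(v,p_j)$, $d_T(u,p_j) < d_T(v,p_j)$ and $d_T(p_j,w) < d_T(v,w)$. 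So $\{v,u\}\in MST(V,d_T)$ and $u \prec (v,w)$, completing the induction. The two appeals to the four-point condition are the only tree-metric-specific steps; everything else is the standard exchange argument.
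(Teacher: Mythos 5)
Your proof is correct, but it takes a genuinely different route from the paper and does more work than necessary. For statement~1 you give the standard cut/exchange argument; the paper simply cites this as a known fact for all metrics. For statement~2 the paper proves two structural lemmas first --- Lemma~\ref{thm:root}, a ``median'' lemma stating that if $u$ and $v$ are both strictly closer to a reference node $r$ than $w$ is, then $u \prec (v,w)$ or $v \prec (u,w)$; and Lemma~\ref{lemma:greedy-routing}, stating that along the unique MST path from $v$ to $w$ the distances to $v$ strictly increase. Reading Lemma~\ref{lemma:greedy-routing} from both endpoints shows the first node $p_1$ of that path already satisfies $d_T(v,p_1) < d_T(v,w)$ and $d_T(p_1,w) < d_T(v,w)$ simultaneously, so $p_1$ is always the witness and no induction is needed. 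In particular your ``hard case'' $d_T(p_1,w) > d_T(v,w)$ is vacuous in a tree metric: the greedy-routing lemma applied to the reversed path gives that $d_T(p_i,w)$ strictly decreases in $i$, so your index $j$ is always $1$ and your base case is already the whole argument. That said, the hard-case machinery you build is internally sound as a conditional argument --- the hypotheses you feed the four-point condition do hold, the telescoping identity $d_T(p_i,w) - d_T(v,p_i) = \Delta$ follows, and the final lift from $u \prec (v,p_j)$ to $u \prec (v,w)$ via the four-point condition on $\{u,v,p_j,w\}$ checks out --- so the proof stands, it is just not minimal. The trade-off between the two approaches is that the paper's decomposition yields Lemma~\ref{thm:root}, which it reuses directly later in the convergence analysis (Lemma~\ref{lemma:locally_checkable} invokes it in both cases of its proof), whereas your self-contained induction produces no intermediate statement the rest of the paper can call on.
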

In the following, we will outline the proof and thereby present some helpful lemmas,
which we will reuse in Section \ref{sec:analysis}.
First, we note that the lemma's first statement is generally true for all metrics (cf. \cite{Supowit83}).
Thus, it remains to show the second statement.
We begin the proof with a useful fact that will be at the core of many proofs in this paper.
\begin{lemma}
\label{thm:root}
Let $(V,d_T)$ be a tree metric.
Further let $u,v,w,r \in V$ be four nodes, s.t.
\[
    d_T(u,r) < d_T(w,r) \,\wedge\, d_T(v,r) < d_T(w,r)
\]
Then it either holds $u \prec (v,w)$
or $v \prec (u,w)$ (and in particular not $w \prec (u,v)$).
\end{lemma}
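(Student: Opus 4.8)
The plan is to work directly with the tree $T$ via the \emph{median} of the three nodes $u,v,w$. Let $m\in V_T$ be the unique point of $T$ that lies on all three of the paths between $u$, $v$, and $w$ (the center of the tripod they span). Then $d_T(x,y)=d_T(x,m)+d_T(m,y)$ for all distinct $x,y\in\{u,v,w\}$. Writing $a:=d_T(u,m)$, $b:=d_T(v,m)$, $c:=d_T(w,m)$, the three pairwise distances become $d_T(u,v)=a+b$, $d_T(u,w)=a+c$, $d_T(v,w)=b+c$. Substituting these into the definitions, $u\prec(v,w)$ is equivalent to ``$a<b$ and $a<c$'', $v\prec(u,w)$ is equivalent to ``$b<a$ and $b<c$'', and $w\prec(u,v)$ is equivalent to ``$c<a$ and $c<b$''. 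Since the metric has pairwise distinct distances, $a+b$, $a+c$, $b+c$ are pairwise distinct, which forces $a,b,c$ to be pairwise distinct. Hence exactly one of $a,b,c$ is the strict minimum, and correspondingly exactly one of the three $\prec$-relations holds. So it suffices to prove that $c$ is \emph{not} the minimum of $\{a,b,c\}$, i.e.\ that $a<c$ or $b<c$: this immediately yields that $u\prec(v,w)$ or $v\prec(u,w)$, and also that $w\prec(u,v)$ fails.

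To exploit the hypothesis I would bring $r$ into the picture through its attachment point $p$ to the tripod: let $p$ be the point of the closed subtree of $T$ spanned by $\{u,v,w\}$ that is closest to $r$. Then $p$ lies on every path from $r$ to a node of the tripod, so $d_T(r,x)=d_T(r,p)+d_T(p,x)$ for each $x\in\{u,v,w\}$. The two hypotheses $d_T(u,r)<d_T(w,r)$ and $d_T(v,r)<d_T(w,r)$ therefore simplify (cancelling the common term $d_T(r,p)$) to $d_T(p,u)<d_T(p,w)$ and $d_T(p,v)<d_T(p,w)$. Now I case on which leg of the tripod contains $p$. If $p$ lies on the leg toward $w$, then $d_T(p,u)=d_T(p,m)+a$ and $d_T(p,w)=c-d_T(p,m)$, so $d_T(p,u)<d_T(p,w)$ gives $a<c-2\,d_T(p,m)\le c$. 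If $p$ lies on the leg toward $v$, then $d_T(p,u)=d_T(p,m)+a$ and $d_T(p,w)=d_T(p,m)+c$, so the same inequality gives $a<c$. If $p$ lies on the leg toward $u$, then $d_T(p,v)=d_T(p,m)+b$ and $d_T(p,w)=d_T(p,m)+c$, so the second inequality gives $b<c$. In every case $c$ is not the minimum of $\{a,b,c\}$, which completes the argument.

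The step I expect to be the main obstacle is not the inequality bookkeeping but setting up the geometric scaffolding cleanly: justifying that four points in a weighted tree always decompose as a tripod with center $m$ plus a pendant path from $r$ meeting the tripod at a single point $p$, and handling the degenerate cases where $m$ coincides with one of $u,v,w$ (so some leg has length $0$) or where $p$ coincides with $m$ or with one of the nodes. I would phrase the leg inequalities so that zero-length legs and the boundary cases are covered by the same (weak/strict) inequalities used above. As an alternative that avoids mentioning Steiner points, one can instead \emph{define} $a,b,c$ by the Gromov products, e.g.\ $a=\tfrac12\bigl(d_T(u,v)+d_T(u,w)-d_T(v,w)\bigr)$ and symmetrically, invoke the four-point condition ($0$-hyperbolicity) of tree metrics to get $a,b,c\ge 0$ together with $a+b=d_T(u,v)$ and its analogues, and run the identical computation; but the tripod picture is the more transparent write-up.
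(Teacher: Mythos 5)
Your proof is correct, and it takes a genuinely different route from the paper's. The paper works with the median $\varphi$ of the triple $\{u,v,r\}$, first shows $d_T(u,\varphi)<d_T(w,\varphi)$ and $d_T(v,\varphi)<d_T(w,\varphi)$, then observes that $\varphi$ must lie on $P_T(u,w)$ or on $P_T(v,w)$, and in each case directly refutes one of the two inequalities defining $w\prec(u,v)$ via additivity along tree paths. You instead take the median $m$ of $\{u,v,w\}$ and parametrize by the leg lengths $a,b,c$, which reduces each $\prec$-relation to ``that leg is strictly shortest''; then $r$ enters only through its gateway point $p$ onto the tripod, and the case split on which leg holds $p$ shows $c$ cannot be the minimum. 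What each buys: the paper's choice of median already involves $r$, so it never needs the gateway argument and its case split is on which of two paths contains $\varphi$; your version gives a sharper structural picture (it pins down exactly which of the three $\prec$-relations holds, as the tripod's shortest leg, not merely that it is one of two), and the Gromov-product reformulation you mention makes the argument coordinate-free and extensible to $0$-hyperbolic metrics without Steiner points. Your bookkeeping around degenerate positions of $m$ and $p$ is the one place the write-up needs care, but you flag it and the weak/strict inequalities do close those cases.
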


\begin{proof}
Let $T:=(V_T,E_T,w)$ be the tree which implies the metric $d_T$.
Further denote the unique path shortest between two nodes $s,t \in V_T$ in $T$ as $P_T(s,t)$.
Last, let $\varphi \in P_T(u,v) \cap P_T(u,r) \cap P_T(v,r) $ be a node that lies on all three unique shortest
paths between the nodes $u,v$ and $r$.
Note that $\varphi$ is also called \emph{median} of $u,v$ and $r$ and is unique in a tree.
First, we show that
\[
        d_T(u,\varphi) < d_T(w,\varphi) \,\,\, \wedge \,\,\, d_T(v,\varphi) < d_T(w, \varphi)
\]
Assume for contradiction that $d_T(u,\varphi) > d_T(w,\varphi)$.
From the triangle inequality we can follow that $d_T(w,r) \leq d_T(w,\varphi) + d_T(\varphi,r)$.
If we combine these two inequalities, we deduce
\[
 d_T(w,r) \leq d_T(w,\varphi) + d_T(\varphi,r) < d_T(u,\varphi) + d_T(\varphi,r) = d_T(u,r)
\]
This would be a contradiction to our initial assumption that $d_T(u,r) < d_T(w,r)$.
Therefore, it must hold $d_T(u,\varphi) < d_T(w,\varphi)$.
The proof for $d_T(v,\varphi) < d_T(w, \varphi)$ is analogous and thus, our claim holds.

Second, we prove that it holds $\varphi \in P_T(v,w)$ or $\varphi \in P_T(u,w)$.
Assume for the sake of contradiction that neither $\varphi \in P_T(v,w)$ nor $\varphi \in P_T(u,w)$.
Then there is a path from $u$ to $v$ via $w$ that does not contain $\varphi$.
This is a contradiction to the fact that there is only
one simple path $P_T(u,v)$ between $u$ and $v$
and per definition it holds $\varphi \in P_T(u,v)$.

Now distinguish between the two cases we have just shown:
\begin{enumerate}
    \item If $\varphi \in P(v,w)$, the following inequality must hold.
    \begin{align*}
        d_T(u,v) =  d_T(u,\varphi) + d_T(\varphi,v) < d_T(w,\varphi) + d_T(\varphi,v) = d_T(v,w)
    \end{align*}
    This follows from the fact that $d_T(u,\varphi) < d_T(w,\varphi)$.
    Since $d_T(v,w) < d_T(u,v)$ is one of the two requirements for $w \prec (u,v)$, it cannot hold in this case.
    \item Otherwise, if $\varphi \in P(u,w)$, it holds
    \begin{align*}
        d_T(u,v) =  d_T(u,\varphi) + d_T(\varphi,v) < d_T(u,\varphi) + d_T(\varphi,w) = d_T(u,w)
    \end{align*}
    This follows from the fact that $d_T(v,\varphi) < d_T(w,\varphi)$.
    Since $d_T(u,w) < d_T(u,v)$ is required for $w \prec (u,v)$, it cannot hold in this case either.
\end{enumerate}
Hence, it must hold $u \prec (v,w)$ or $v \prec (u,w)$, which was to be shown.
\end{proof}


Using Lemma \ref{thm:root} we can show the following.
\begin{lemma}
    \label{lemma:greedy-routing}
    Let $(V,d_T)$ be a tree metric and $v,w \in V$ two of its nodes.
    Further, let  $v_0, \dots, v_k \in V$ be the unique path from $v_0 := v$ to $v_k := w$ in the MST.
    Then it holds:
    \[
        d_T(v_{i},v) < d_T(v_{i+1},v) \,\,\, \forall v_i \in (v_0, \dots, v_{k-1})
    \]
\end{lemma}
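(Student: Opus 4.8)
The plan is to prove the statement by induction on the position $i$ along the MST path, using Lemma~\ref{thm:root} with the source node $v$ itself playing the role of the ``root'' $r$, combined with the forbidden-configuration characterization of MST edges in Lemma~\ref{lemma:equivalence}(1). Concretely, I want to show $P(i):\ d_T(v_i,v) < d_T(v_{i+1},v)$ for $i=0,\dots,k-1$ (the case $k=0$ being vacuous). The base case $P(0)$ is immediate: $d_T(v_0,v) = d_T(v,v) = 0 < d_T(v_1,v)$, since $v_1 \neq v$ and distances are positive.

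For the inductive step, fix $1 \le i \le k-1$ and assume $P(i-1)$, that is, $d_T(v_{i-1},v) < d_T(v_i,v)$. Suppose, for contradiction, that $P(i)$ fails; since all pairwise distances are distinct and $v_{i+1}\neq v_i$, this forces $d_T(v_{i+1},v) < d_T(v_i,v)$. Now I apply Lemma~\ref{thm:root} with $r := v$ to the three nodes $v_{i-1}, v_{i+1}, v_i$ (pairwise distinct because the path is simple), taking $v_i$ in the role of ``$w$'': the inductive hypothesis gives $d_T(v_{i-1},v) < d_T(v_i,v)$ and the contradiction assumption gives $d_T(v_{i+1},v) < d_T(v_i,v)$, so both hypotheses of the lemma hold. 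It therefore yields $v_{i-1} \prec (v_{i+1},v_i)$ or $v_{i+1} \prec (v_{i-1},v_i)$ (recall $u \prec (a,b)$ is symmetric in the pair $a,b$, since $d_T$ is).

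Both alternatives are impossible. Because $v_0,\dots,v_k$ is a path in $MST(V,d_T)$, the pairs $\{v_{i-1},v_i\}$ and $\{v_i,v_{i+1}\}$ are both MST edges, so Lemma~\ref{lemma:equivalence}(1) guarantees that no node $u$ satisfies $u \prec (v_i,v_{i+1})$ and no node $u$ satisfies $u \prec (v_{i-1},v_i)$. Instantiating $u := v_{i-1}$ refutes the first alternative and $u := v_{i+1}$ refutes the second. This contradiction establishes $d_T(v_i,v) < d_T(v_{i+1},v)$, completing the induction.

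I expect the only real difficulty to be spotting the right instantiation of Lemma~\ref{thm:root} — namely that the source node $v$ should serve as its root, and that the two consecutive MST edges incident to $v_i$ are exactly what eliminates the two cases the lemma leaves open. Once that observation is made the argument is purely combinatorial and needs no estimates in the metric; everything else (the base case, the use of distinct distances to upgrade ``$\not<$'' to ``$<$'') is routine.
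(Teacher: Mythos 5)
Your proof is correct, but it reaches the contradiction by a genuinely different route than the paper's. Both proofs share the same skeleton: locate the first index $i \geq 1$ where the claim could fail (you phrase this as induction, the paper as a minimal counterexample — they are equivalent here), combine $d_T(v_{i-1},v) < d_T(v_i,v)$ with the putative $d_T(v_{i+1},v) < d_T(v_i,v)$, and feed both into Lemma~\ref{thm:root} with $r := v$ to obtain the disjunction $v_{i-1} \prec (v_{i+1},v_i)$ or $v_{i+1} \prec (v_{i-1},v_i)$. Where you diverge is the last step. The paper extracts a distance inequality from the disjunction and then runs an explicit exchange argument: removing $\{v_{i-1},v_i\}$ (or $\{v_i,v_{i+1}\}$) from the MST and inserting the strictly shorter chord $\{v_{i-1},v_{i+1}\}$ yields a cheaper spanning tree, contradicting minimality. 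You instead observe that each alternative of the disjunction exhibits a node that is a ``blocker'' for one of the two consecutive MST edges $\{v_{i-1},v_i\}$, $\{v_i,v_{i+1}\}$, which directly contradicts Lemma~\ref{lemma:equivalence}(1) (MST edges have no relative neighbor in between). This is a legitimate and slightly more modular finish, and importantly it is \emph{not} circular: the paper derives only part (2) of Lemma~\ref{lemma:equivalence} from the present lemma, while part (1) — the only part you use — is cited as a known fact for general metrics and is established independently. What your route buys is brevity and the elimination of the cut-and-swap reasoning; what the paper's route buys is self-containment, proving the lemma from the MST definition alone without leaning on the relative-neighbor characterization. Either is acceptable.
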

\begin{proof}[Lemma \ref{lemma:greedy-routing}]
    Assume for the sake of contradiction that the lemma does not hold.
    Let $v_i$ be the first node on a path to $v$ for which it instead holds $d_T(v_{i}, v) > d_T(v_{i+1},v)$.
    Note, that it cannot hold $d_T(v_{i}, v) = d_T(v_{i+1},v)$ because we assume pairwise distinct distances.
    Since $d_T(v,v_0) = 0$ it must hold that $i \geq 1$.
    Therefore $v_{i-1}$ is well-defined and it must hold $d_T(v_{i-1}, v) < d_T(v_{i},v)$
    because $v_i$ is the first node that is further away from $v$ than its successor.
    Combining these two facts yields:
    \[
        d_T(v_{i-1},v) < d_T(v_{i},v) \,\,\, \wedge \,\,\, d_T(v_{i+1},v) < d_T(v_{i},v)
    \]
    Following Lemma \ref{thm:root} it must therefore either hold $v_{i-1} \prec (v_{i},v_{i+1})$ or $v_{i+1} \prec (v_{i-1},v_{i})$.
    In particular that means, it holds either hold $d_T(v_{i-1}, v_{i+1}) < d_T(v_{i},v_{i+1})$ or $d_T(v_{i-1}, v_{i+1}) < d_T(v_{i-1},v_{i})$.
    In the following we assume the latter since both cases are analogous.
    We will now show that we can improve the MST by swapping $\{v_{i-1}, v_{i+1}\}$ for $\{v_{i-1},v_{i}\}$, which is a contradiction.
    If we remove $\{v_{i-1},v_{i}\}$ from $MST(V,d_T)$ we divide the tree
    into two subtrees $T_{i-1}$ and $T_i$, which contains $v_{i-1}$ and $v_{i}$, respectively.
    Further, it holds that $v_{i+1}$ is in $T_i$ because it connected to $v_i$ via the edge $\{v_{i},v_{i+1}\}$.
    Thus, the edge $\{v_{i-1}, v_{i+1}\}$ also connects $T_{i-1}$ and $T_i$ and has lower weight
    than $\{v_{i},v_{i-1}\}$. That means, we can improve $MST(V,d_T)$, which is a contradiction.
    Therefore, there cannot be such a first deviator $v_i$ and the lemma must hold.
\end{proof}

In the remainder, we conclude the proof for Lemma \ref{lemma:equivalence}.
Therefore, let $v,w \in V$ be two nodes with $\{v,w\} \not\in MST(V,d_T)$.
Further, let $u \in V$ be the first node of the path $P_{vw}$ from $v$ to $w$ in the MST.
Such a node must exist because there is no direct edge between $v$ and $w$ in the MST.
Note that $P_{vw}$ contains the same nodes as a path $P_{wv}$ from $w$ to $v$ but in reverse order.
Thus, we can apply Lemma \ref{lemma:greedy-routing} in "both directions".
That means, the node $u$ with $\{v,u\} \in MST(V,d_T)$ must be closer to $w$ than $v$,
but also closer to $v$ than its successor in $P_{wv}$.
A simple induction then yields that $u \prec (v,w)$.
Since by definition it holds $\{v,u\} \in MST(V,d_T)$, this proves the lemma.
\section{Protocol}
\label{sec:protocol}


In this section, we describe our protocol \BG,
which forms and constructs an overlay according to Definition \ref{def:legal_rng}.
Intuitively, the protocol works as follows:
Upon activation, a node $v \in V$ checks, which of its current neighbors are relative
neighbors.
All nodes that fulfill the property are kept in the neighborhood.
All others are delegated in a greedy fashion.
This idea resembles that of the protocols in \cite{JACOB2012137} and \cite{onus2007linearization}, where essentially the same technique is used for different underlying metrics,
i.e., the two-dimensional plane and a line.

\begin{lstlisting}[caption={\BG},label=alg:BuildDG,captionpos=b,float, mathescape=true, belowskip=-\baselineskip, frame = single]
Upon activation a node $v \in V$ performs:
              for all $w \in N_v$
                  if $\exists u \in N_v: \, u \prec (v,w)$
                    $N_u \longleftarrow N_u \cup \{w\}$ #$v$ delegates $w$ to $u$
                    $N_v \longleftarrow N_v \setminus \{w\}$
                  else
                    $N_w \longleftarrow N_w \cup \{v\}$ #$v$ introduces itself
\end{lstlisting}

The pseudocode for this protocol is given in Figure \ref{alg:BuildDG}.
Therein, each node $v \in V$ only maintains a single variable $N_v \subseteq V$.
This is a set that contains all currently stored references to other nodes.
It contains each entry only once and multiple occurrences of the same reference are merged automatically.

With each activation, a node iterates over all nodes in $w \in N_v$ and checks whether to delegate $w$ or to introduce itself.
In this context, a delegation means that $v$ sends a reference of $w$ to $u$ and then deletes the reference to $w$ from $N_v$.
The protocol assures that a node $v$ delegates $w$ to $u$, if and only if it holds $u \prec (v,w)$.
Otherwise $v$ introduces itself to $w$, which means that it sends a reference of itself to $w$.
Note, that the primitives of introduction and delegation preserve the system's connectivity (cf. \cite{corona}).

In the pseudocode introductions and delegations are indicated by statements of the form $N_u \longleftarrow N_u \cup \{w\}$.
This notation is used for convenience. It describes that the executing node $v$ sends a message containing a reference of
$w$ to $u$.
The variable $N_u$ is not directly changed and $w$ is only added in some later configuration
when $u$ is activated and the message is delivered to $u$.
A graphical example of the protocol's computations can be seen in Figure~\ref{fig:protocol}.

\begin{figure}
    \begin{subfigure}[t]{.49\textwidth}
        \centering
        \begin{tikzpicture}[every tree node/.style={draw,circle, minimum size=1.5em},
        level distance=1cm,sibling distance=1.2cm,
        edge from parent path={(\tikzparentnode) -- (\tikzchildnode)}]
        \tikzset{level 1/.style={level distance=0.905cm, sibling distance=0.16cm}}
        \tikzset{level 2/.style={level distance=0.7cm, sibling distance=1.5cm}}
        \tikzset{level 3/.style={level distance=0.6cm, sibling distance=0.32cm}}
        \Tree [.\node[opacity=1] (w) {$w$};
        \edge[opacity=0.8] node[auto=left] {$6$};[.\node[opacity=0.8, dotted] (phi) {};
        \edge[opacity=0.8] node[auto=right] {$1$}; [.\node (v) {$u$}; ]
        \edge[opacity=0.8] node[auto=left] {$5$}; [
        .\node (u) {$v$};
        ]
        ]
        ]
        \path (u) edge[->, line width=1pt, bend left,  color=red] node[auto=left] (t) {$6$} (v);
        \path (u) edge[->, line width=1pt, bend right, color=red] node[auto=right] (t) {$11$} (w);
        \end{tikzpicture}
        \caption{An example configuration:
        $v$ has $u$ and $w$ in its local memory.
        Note that $u$ and $w$ are neighbors of each other in the \DG.}
    \end{subfigure}
    ~
    \begin{subfigure}[t]{.49\textwidth}
        \centering
        \begin{tikzpicture}[ every tree node/.style={draw,circle, minimum size=1.5em},
        level distance=1cm,sibling distance=1.2cm,
        edge from parent path={(\tikzparentnode) -- (\tikzchildnode)}]
        \tikzset{level 1/.style={level distance=0.905cm, sibling distance=0.16cm}}
        \tikzset{level 2/.style={level distance=0.7cm, sibling distance=1.5cm}}
        \tikzset{level 3/.style={level distance=0.6cm, sibling distance=0.32cm}}
        \Tree [.\node[opacity=1] (w) {$w$};
        \edge[opacity=0.8] node[auto=left] {$6$};[.\node[opacity=0.8, dotted] (phi) {};
        \edge[opacity=0.8] node[auto=right] {$1$}; [.\node (v) {$u$}; ]
        \edge[opacity=0.8] node[auto=left] {$5$}; [
        .\node (u) {$v$};
        ]
        ]
        ]
        \path (u) edge[->,line width=1pt, bend left,  color=red] node[auto=left] (t) {$6$} (v);
        \path (v) edge[->, line width=1pt, color=red, dashed] node[auto=left] (t) {$6$} (u);
        \path (v) edge[->, line width=1pt, bend left, color=red, dashed] node[auto=left] (t) {$7$} (w);
        \end{tikzpicture}
        \caption{The succeeding configuration: $v$ has delegated $w$ to $u$ and introduced itself to $u$.}
    \end{subfigure}

    \caption[An example of the execution of \BG]{An example of the protocol's execution.
    The black edges are part of the underlying tree.
     Red edges denote the overlay's edges.
     The dotted edges are implicit, i.e., the references are still the node's channel. 
     Solid edges are explicit,i.e., the references are in the node's memory.
     The numbers denote the edges' weights.}
    \label{fig:protocol}
\end{figure}
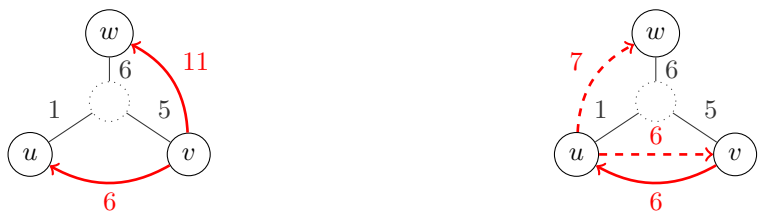

\section{Analysis}
\label{sec:analysis}
In this section we rigorously analyze \BG.
We prove the protocol's correctness with regard to Definition \ref{def:self-stabilization}
and the set of configurations given in Definition \ref{def:legal_rng}.
Furthermore, we bound the protocol's convergence time.

The main result of this section is that \BG is indeed a self-stabilizing protocol as stated by the following theorem:
\begin{theorem}
    \label{theorem:self_stabil}
    Let $(V,d_T)$ be a tree metric.
    Then \BG is a self-stabilizing protocol that constructs an overlay with regard to $\mathcal{L}_{MST}$.
\end{theorem}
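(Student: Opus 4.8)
The plan is to establish the two properties of Definition~\ref{def:self-stabilization} separately, after first reducing to a single connected component. This reduction is legitimate: introduction and delegation preserve connectivity and never merge two components (a reference to a node $w$ is only ever held by a node that is already in $w$'s component), so the partition of $V$ into connected components of $G^*_c$ is invariant under every computation, and from now on I write $V$ for the node set of one such component. For \emph{Closure}, take $c\in\mathcal{L}_{MST}$ and activate a node $v$. Every node $w$ that lies in $N_v$, or that an incoming message puts into $N_v$, satisfies $\{v,w\}\in MST(V,d_T)$, so by the first part of Lemma~\ref{lemma:equivalence} there is no $u\in V$ — and hence none in $N_v$ — with $u\prec(v,w)$; therefore $v$ never delegates, and the only edges it can create are implicit edges $(w,v)$ with $\{v,w\}\in MST(V,d_T)$, while no edge between different components is ever introduced. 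The successor configuration therefore again satisfies Definition~\ref{def:legal_rng}.

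For \emph{Convergence} I would argue in two stages. Call an MST edge $\{v,w\}$ \emph{realized} in $c$ if $w\in N_v$ and $v\in N_w$. First observe that once $w\in N_v$ holds for an MST edge $\{v,w\}$, node $v$ never deletes $w$ again (the first part of Lemma~\ref{lemma:equivalence} forbids a witness) and introduces itself to $w$ on every activation, so within finitely many steps $v\in N_w$ and the edge stays realized forever; symmetrically if $v\in N_w$ appears first. Hence it suffices to show that every MST edge eventually appears as an explicit or implicit edge in at least one direction, and I would prove this by induction on the position of the edge in the weight-ordered list of MST edges. When treating $\{v,w\}$, let $t_0$ be a time after which all lighter MST edges are permanently realized, and let $A$ be the node set of the Kruskal tree that contains $v$ immediately before $\{v,w\}$ is inserted; then $w\notin A$, every edge inside $A$ is realized after $t_0$, and by the cut property $\{v,w\}$ is the \emph{unique} lightest edge of $V^2$ with exactly one endpoint in $A$. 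The idea is to track $\nu(c):=\min\{\, d_T(x,y) : \{x,y\}\in E^*_c,\ |\{x,y\}\cap A|=1 \,\}$, the weight of the lightest edge of $G^*_c$ crossing the cut $(A,V\setminus A)$, which exists because $G^*_c$ is connected and $A\neq V$. Since a reference leaves a node only through a delegation and a delegation always installs a reference strictly closer to its target, one shows $\nu$ is non-increasing after $t_0$; and using the second part of Lemma~\ref{lemma:equivalence} — together with greedy routing (Lemma~\ref{lemma:greedy-routing}) and the induction hypothesis, which guarantees that the needed witness, lying on a lighter MST edge, is already known — one shows that as long as $\nu>d_T(v,w)$ the node currently attaining $\nu$ is, within finitely many activations, forced to delegate its crossing reference strictly closer, so $\nu$ strictly decreases. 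Because $d_T$ takes only finitely many values, $\nu$ must eventually equal $d_T(v,w)$, and since $\{v,w\}$ is the unique lightest crossing edge this means $w\in N_v$ or $v\in N_w$ there, completing the induction.

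Once all (finitely many) MST edges are permanently realized, I would finish with a multiset potential: let $\Phi(c)$ be the finite multiset of all values $d_T(x,y)$ for which $(x,y)$ is an explicit or implicit edge of $c$ with $\{x,y\}\notin MST(V,d_T)$, compared in the Dershowitz--Manna ordering over the finite value set of $d_T$. When an activated node $v$ processes some $w\in N_v$ with $\{v,w\}\notin MST(V,d_T)$, then — $v$ already knowing all of its MST-neighbours — the second part of Lemma~\ref{lemma:equivalence} supplies a witness $u\in N_v$ with $u\prec(v,w)$, so $v$ delegates $w$ to some $u'$ with $d_T(u',w)<d_T(v,w)$: one copy of $d_T(v,w)$ leaves $\Phi$ and at most the strictly smaller value $d_T(u',w)$ enters, while processing an MST-neighbour only adds an implicit MST edge and does not change $\Phi$. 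Thus every activation weakly decreases $\Phi$, and while $\Phi\neq\emptyset$ the offending reference is, by fair message receipt and weak fairness, eventually processed and strictly decreases $\Phi$; by well-foundedness $\Phi$ reaches $\emptyset$ and then stays empty, since no new non-MST reference can be produced once every node knows all of its MST-neighbours. A configuration with all MST edges realized and $\Phi=\emptyset$ has $N_v$ equal to exactly the MST-neighbours of $v$ for every $v$ and no implicit non-MST edge, i.e., it lies in $\mathcal{L}_{MST}$, so together with Closure the theorem follows. \textbf{I expect the main obstacle to be} the inductive step of the first stage: because message delivery is asynchronous and non-FIFO, stray non-MST references and not-yet-realized heavier MST edges are simultaneously in flux and can transiently prevent the node attaining $\nu$ from delegating, so the monotonicity argument for $\nu$ has to be set up with care — perhaps by strengthening the induction hypothesis, or by replacing the edge-by-edge induction with a single global measure dominating both the ``missing MST edge'' and the ``stray reference'' defects — to ensure that such blocking is always eventually overcome.
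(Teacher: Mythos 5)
Your closure argument and your second convergence stage (the multiset potential driving non-MST edges to zero) are sound and close in spirit to the paper's own Lemma~\ref{lemma:convergenceII}, which uses the simpler scalar potential $\max_{e\in E^*_c\setminus MST(V,d_T)} d_T(e)$ instead of a Dershowitz--Manna multiset. The genuine problem is the first convergence stage, and it is exactly where you yourself flag uncertainty. Your Kruskal-cut induction hinges on the claim that, while the lightest $G^*_c$-edge $\{x,y\}$ crossing the cut $(A,V\setminus A)$ still has weight $\nu > d_T(v,w)$, the holder of that reference can delegate it strictly closer because ``the needed witness, lying on a lighter MST edge, is already known.'' That is not guaranteed. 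By Lemma~\ref{lemma:equivalence}(2) the natural witness $u$ is the first node after $x$ on the MST path from $x$ to $y$. If $u\in A$ then indeed $\{x,u\}$ lies in the Kruskal forest of $A$, is lighter than $\{v,w\}$, and is covered by your induction hypothesis. But the MST path from $x$ may leave $A$ at its very first edge, i.e.\ $u\notin A$; then $\{x,u\}$ is itself an MST edge \emph{crossing the cut} with weight at least $d_T(v,w)$, so it is not covered by the induction hypothesis, and moreover $d_T(x,u)<\nu$ forces $\{x,u\}\notin E^*_c$, so $x$ simply does not know $u$. In the same vein, the crossing edge attaining $\nu$ may itself be a heavier MST edge (the Kruskal cut $(A,V\setminus A)$ can be crossed by several MST edges, all heavier than $\{v,w\}$), and then Lemma~\ref{lemma:equivalence}(1) shows its holder will \emph{never} delegate it, so $\nu$ does not decrease through that edge at all. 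So the obstruction is not merely a transient race caused by non-FIFO delivery; it is a structural failure of the cut-local measure, and the fix you half-propose --- ``a single global measure dominating both the `missing MST edge' and the `stray reference' defects'' --- is essentially what the paper does.

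Concretely, the paper replaces your per-cut $\nu$ by the global potential $\Phi(c)=\sum_{e\in\mathcal{M}_c}d_T(e)$ where $\mathcal{M}_c=MST(E^*_c,d_T)$ (Definition~\ref{def:potential}). Lemma~\ref{lemma:potential_decreasing} shows $\Phi$ is non-increasing; Lemma~\ref{lemma:locally_checkable} shows that whenever $\Phi$ is suboptimal there exist $u,v,w$ with $\{v,u\},\{v,w\}\in\mathcal{M}_c$ and $u\prec(v,w)$ (the key point being that the improving witness lies on an edge of $\mathcal{M}_c$, not on an arbitrary MST edge, so it is part of the current configuration); and Lemma~\ref{lemma:eventually_explicit} shows that while $\Phi$ is constant every edge of $\mathcal{M}_c$ becomes and stays explicit, so the improving delegation is eventually triggered and $\Phi$ strictly drops. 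Once $\Phi$ is minimal, $\mathcal{M}_c=MST(V,d_T)$, and the same lemma yields that all valid edges are permanently realized; your stage two then finishes the argument. You would likely arrive at the same structure if you followed your own fallback suggestion, since tracking the lightest surviving crossing edge for a single Kruskal cut is exactly what the MST-of-$G^*_c$ potential does for all cuts simultaneously, and it is precisely the simultaneity that removes the circularity your induction runs into when the needed witness sits on an unrealized, heavier, crossing MST edge.
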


In this section, we will concentrate on initial configurations $c_0 \in C$ where $G^*_{c_0}$ is connected.
Since two nodes from different components can \emph{never} communicate with each other (cf. \cite{corona}),
the result can trivially be extended to all initial configurations.

Our proof's structure is as follows.
First, we will show that eventually the system will contain all edges of $MST(V,d_T)$ and also keeps them in all subsequent configurations.
This will be the major part of this section.
Then we show that all remaining edges that are not part of the MST but may still be part of a configuration will eventually vanish.
This proves the protocol's convergence.
Last, we prove that once the system is in a legal configuration,
the set of explicit edges does not change and no more edges that are not part of the MST are added.
This shows the protocol's closure.
Over the course of this section we will refer to all edges $e \in MST(V,d_T)$ as \textit{valid} edges.
We call all other edges \textit{invalid}.

We begin by showing that the system eventually reaches a configuration that contains all valid edges.
For the proof, we assign a potential to each configuration $c \in C$.
As the potential, we choose the weight of the minimum spanning tree that can be constructed from all implicit and explicit edges in the configuration
if we ignore their direction, i.e., we consider the MST of $G_c^*$.
Since $G_c^*$ is simply an undirected, weighted graph with unique edge weights,
it must have a unique minimum spanning tree if it is connected.
This fact is a well-known result in graph theory. 
The potential is formally defined as follows:
\begin{definition}[Potential]
    \label{def:potential}
    Let $c \in C$ be a configuration and $\mathcal{M}_c := MST(E_c^*,d_T)$ the minimum spanning tree of $G_c^* := (V,E_c^*)$,
    then the potential $\Phi: C \rightarrow \mathbb{R}^+$ is defined as $\Phi(c) :=
    \begin{cases}
        \sum_{e \in \mathcal{M}_c} d_T(e)& \mbox{if $G_c^*$ is connected}\\
        \infty&\mbox{else}
    \end{cases}$
\end{definition}
The weight of the globally optimal minimum spanning tree $MST(V,d_T)$
that considers all edges provides a lower bound for the potential.
Therefore, it cannot decrease indefinitely.
In the following, we show that the potential decreases monotonically and once the system reached a configuration with minimum potential
it will eventually contain all valid edges. First, we show that the potential can not increase.
\begin{lemma}
    \label{lemma:potential_decreasing}
    Consider an execution of \BG and let the system be in configuration $c \in C$.
    Further, let $c'$ be an arbitrary succeeding configuration of $c$.
    Then it holds $\Phi(c') \leq \Phi(c)$.
\end{lemma}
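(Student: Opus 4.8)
The plan is to prove the invariant that a step never destroys an edge of $G^*_c$ without leaving behind a strictly lighter ``detour'', and then read off $\Phi(c')\le\Phi(c)$ from the cycle property of minimum spanning trees. First I would dispose of the trivial case: if $G^*_c$ is disconnected then $\Phi(c)=\infty$ and nothing is to show. If $G^*_c$ is connected, then since introduction and delegation preserve connectivity (cf.\ \cite{corona}) and the delivery of a message only turns an implicit edge incident to the activated node into an explicit one (hence does not change $E^*$), the graph $G^*_{c'}$ is connected as well, so both potentials are finite and it suffices to compare the weights of $\mathcal{M}_{c'}$ and $\mathcal{M}_c$.

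Second, I would pin down exactly how $E^*_{c'}$ can differ from $E^*_c$. In a step only the activated node $v$ changes its local variables, and it removes a reference from $N_v$ only when it delegates it. Hence every undirected edge in $E^*_c\setminus E^*_{c'}$ has the form $\{v,w\}$ where $v$ delegated $w$ during this activation; and whenever $v$ delegates $w$ to a node $u$, the protocol guarantees $u\prec(v,w)$, i.e.\ $d_T(v,u)<d_T(v,w)$ and $d_T(u,w)<d_T(v,w)$, and $v$ sends a reference of $w$ to $u$, so $\{u,w\}\in E^*_{c'}$.

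Third, for each removed edge $e=\{v,w\}$ I would exhibit a replacement path in $G^*_{c'}$ all of whose edges are strictly lighter than $e$. Start from the delegation of $w$ to $u$. If $\{v,u\}\in E^*_{c'}$ we are done, since both $\{v,u\}$ and $\{u,w\}$ have weight below $d_T(v,w)$. Otherwise $v$ must also have delegated $u$ during this activation, say to $u'$ with $u'\prec(v,u)$, which yields $\{u',u\}\in E^*_{c'}$ and $d_T(v,u')<d_T(v,u)<d_T(v,w)$; iterating this, the chain of successive delegation targets strictly decreases in $d_T(v,\cdot)$, hence terminates after finitely many steps at a node $z$ that $v$ does not delegate, so $\{v,z\}\in E^*_{c'}$ (because $v$ keeps $z$ in $N_v$ and introduces itself to it). Concatenating the chain edges and the edge $\{u,w\}$ gives a walk $v\to z\to\cdots\to u\to w$ in $G^*_{c'}$; every edge on it has weight at most $d_T(v,u)<d_T(v,w)$ or weight $d_T(u,w)<d_T(v,w)$, and extracting a simple sub-path joining $v$ and $w$ gives the desired replacement path. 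The main obstacle here is precisely this bookkeeping: since $v$ may delegate both a node and its delegation target in one activation, one cannot simply claim that $\{v,u\}$ survives, and one has to argue the chain of targets is well-founded and ends at a node still joined to $v$ in $c'$.

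Finally I would invoke the cycle property on $H:=(V,E^*_c\cup E^*_{c'})$, which is connected since it contains $G^*_c$. For each $e\in E^*_c\setminus E^*_{c'}$, the edge $e$ together with its replacement path contains a simple cycle on which $e$ is the unique heaviest edge (distances are pairwise distinct), so $e$ lies in no minimum spanning tree of $H$; as edge weights are distinct, $MST(E^*_c\cup E^*_{c'},d_T)$ is unique and therefore avoids all of $E^*_c\setminus E^*_{c'}$, i.e.\ it is a spanning tree of $G^*_{c'}$. Consequently $\Phi(c')=\sum_{e\in\mathcal{M}_{c'}}d_T(e)\le\sum_{e\in MST(E^*_c\cup E^*_{c'},d_T)}d_T(e)\le\sum_{e\in\mathcal{M}_c}d_T(e)=\Phi(c)$, where the last inequality holds because $\mathcal{M}_c$ is itself a spanning tree of the supergraph $H$. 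This establishes $\Phi(c')\le\Phi(c)$.
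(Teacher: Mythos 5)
Your proof is correct, and it takes a genuinely different route from the paper's. The paper orders the delegations performed in the step as a sequence of edge swaps $(e_i,e'_i)$ consistent with the for-loop, defines intermediate edge sets $E_0:=E^*_c$ and $E_i:=E_{i-1}\setminus\{e_i\}\cup\{e'_i\}$, and shows inductively that $MST(E_i,d_T)$ is no heavier than $MST(E_{i-1},d_T)$; the subtlety you correctly identify, that a delegation target $u$ may itself be delegated later in the same activation, is absorbed there into the requirement that the swap order be protocol-consistent (so that $\{v,u\}\in E_{i-1}$ at the moment $\{v,w\}$ is swapped out), after which $u\in T_v$ is shown by contradiction. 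You instead argue directly about the two endpoint configurations: for each $\{v,w\}\in E^*_c\setminus E^*_{c'}$ you construct a replacement walk in $G^*_{c'}$ by following the chain of delegation targets until it reaches a node $z$ that $v$ retains, prove termination by the strict decrease of $d_T(v,\cdot)$ along the chain, verify that every edge of the walk is strictly lighter than $\{v,w\}$, and then invoke the cycle property once on $H=(V,E^*_c\cup E^*_{c'})$ to get $MST(H,d_T)\subseteq E^*_{c'}$ and hence $\Phi(c')\le\sum_{e\in MST(H,d_T)}d_T(e)\le\Phi(c)$. Your version avoids reasoning about intermediate edge sets that are not the $E^*$ of any actual configuration and delegates the exchange step to a standard lemma, at the cost of the explicit chain-termination bookkeeping; the paper's version is more self-contained but relies on the somewhat informal notion of a protocol-consistent ordering of swaps. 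Both arguments are sound.
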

\begin{proof}
    To simplify notation let $E$ and $E'$ be the set of all edges in $G_c^*$ and $G_{c'}^*$ respectively.
    In the following, we will show that we can only construct equally good or better spanning trees
    from the edges in $E'$.
    Per definition, exactly one node $v \in V$ is activated in the transition from $c$ to $c'$.
    This node then executes the for-loop given in the pseudocode in Listing \ref{alg:BuildDG}.
    Let $v$ be the node that is activated and $\{v,w\} \in E$ be an edge that is delegated removed from $E$
    during its activation,
    i.e., $v$ delegates $w$ to some node $u$.
    As a result of the delegation,
    the configuration $c'$ contains the (implicit) edge $(u,w) \in E^T_{c'}$ and thus $E'$ contains the edge $\{u,w\} \in E'$.
    This allows us to view the delegation as swapping edge $\{v,w\}$ for $\{u,w\}$.

    In the following we observe the swaps $(e_1,e'_1), \dots, (e_k,e'_k)$,
    such that $e_i \in E$ is swapped for $e'_i \in E'$ in the transition from $c$ to $c'$.
    The order in which we observe these swaps must be consistent with the protocol.
    That means that two delegations
    must appear in the same order as they could in the for-loop,
    i.e., $v$ can only delegate to node whose reference's are still in its local memory.
    Next, we define $E_0, \dots E_k \subseteq V^2$
    with $E_0 := E$ and $E_i := E_{i-1} \setminus \{e_i\} \cup \{e'_i\}$ for $i>0$ as the edge sets
    resulting from these swaps.

    As the proof's main part we inductively show that each $MST(E_i, d_T)$ with $i \in \{1,\dots,k\}$
    has a lower or equal weight than $MST(E_{i-1}, d_T)$.
    For this, we distinguish between two cases.
    First, if $e_i \not\in MST(E_{i-1}, d_T)$,
    the spanning tree is
    not affected by the swap and thus the weight remains equal.
    Second, if $e_i \in MST(E_{i-1}, d_T)$, we must show
    that we can construct an equally good spanning tree in $E_i$.
    For this, consider
    $\mathcal{M}_i := MST(E_{i-1}, d_T) \setminus \{e_i\} \cup \{e'_i\}$.
    Note $\mathcal{M}_i$ and $MST(E_{i-1}, d_T)$ only differ in the edges $e_i := \{v,w\}$ and $e'_i := \{u,w\}$.
    For the delegation of $w$ to $u$ it must have held $u \prec (v,w)$ and thus $d_T(u,w) < d_T(v,w)$.
    Therefore, $\mathcal{M}_i$ has lower weight than $MST(E_{i-1}, d_T)$.
    It remains to show that $\mathcal{M}_i$ is a connected spanning tree for $V$.
    Further denote $T_v$ and $T_w$ as the subtrees of $MST(E_{i-1}, d_T)$ connected by $\{v,w\}$.
    To prove that $\mathcal{M}_i$ is a spanning tree,
    we must show that $\{u,w\}$ connects $T_v$ and $T_w$, i.e., it holds $u \in T_v$.
    Suppose for contradiction that $u \in T_w$.
    Then the path from $v$ to $u$ in $MST(E_{i-1}, d_T)$ contains the edge $\{v,w\}$.
    Further, note that $E_{i-1}$ must have contained the edge $\{v,u\}$ because $v$
    cannot delegate any node to $u$ without having a reference to $u$ itself.
    Therefore, the edges $\{v,w\}$ and $\{v,u\}$ are \emph{both} part of $E_i$ and \emph{both} connect $T_v$ and $T_w$.
    Now consider that $\{v,u\}$ is shorter than $\{v,w\}$, because a delegation requires $u \prec (v,w)$ and thus $d_T(v,u) < d_T(v,w)$.
    Hence $MST(E_{i-1}, d_T)$ could be improved by swapping $\{v,w\}$ for $\{v,u\}$.
    This is a contradiction because $MST(E_{i-1}, d_T)$ is a minimum spanning tree.
    Therefore $u \in T_v$ and the edge $\{u,w\}$ connects $T_v$ and $T_w$.

    Thus, $\mathcal{M}_i$ is a spanning tree that can be constructed solely from edges in $E_i$.
    Further, it has a lower or equal weight than $MST(E_{i-1}, d_T)$.
    The lemma then follows by a simple induction.
\end{proof}

It remains to show that the potential actually decreases until it reaches the minimum.
That means, we need to show that there cannot be a configuration with suboptimal potential where no more
delegations that decrease the potential occur.
Note that the proof of Lemma \ref{lemma:potential_decreasing}
tells us that the potential decreases if an edge $\{v,w\} \in \mathcal{M}_c$ is delegated.
Therefore, we first show that in each suboptimal spanning tree there is a node
that can potentially detect an improvement.
\begin{lemma}
\label{lemma:locally_checkable}
Let the system be in configuration $c \in C$, s.t. the potential $\Phi(c)$
is not minimum.
Then there must exist nodes $u,v,w \in V$, such that
\[
    \big( u \prec (v,w) \big) \,\, \wedge \,\, \big(\{v,u\} \in \mathcal{M}_c\big)  \,\, \wedge  \,\, \big(\{v,w\} \in \mathcal{M}_c\big)
\]
\end{lemma}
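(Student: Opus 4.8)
The plan is to argue by comparison between the current "implicit+explicit" spanning tree $\mathcal{M}_c$ and the globally optimal tree $MST(V,d_T)$. Since $\Phi(c)$ is not minimum, these two trees differ, so there is an edge $\{v,w\} \in \mathcal{M}_c$ that is \emph{not} in $MST(V,d_T)$; among all such edges, I would pick the one of maximum weight $d_T(v,w)$ (this maximality is the key to making the argument go through). By Lemma \ref{lemma:equivalence}, statement 2, since $\{v,w\} \notin MST(V,d_T)$ there is a node $u$ with $u \prec (v,w)$ and $\{v,u\} \in MST(V,d_T)$. This already gives two of the three required conditions, namely $u \prec (v,w)$ and $\{v,w\} \in \mathcal{M}_c$. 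It remains to produce an edge of $\mathcal{M}_c$ incident to $v$ that plays the role of the third edge — but note the lemma as stated wants $\{v,u\} \in \mathcal{M}_c$ for the \emph{same} $u$, whereas Lemma \ref{lemma:equivalence} only guarantees $\{v,u\} \in MST(V,d_T)$; so the real work is to show such a $u$ can be chosen inside $\mathcal{M}_c$ as well.

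To bridge this gap I would do the following. Consider the cut of $\mathcal{M}_c$ obtained by deleting $\{v,w\}$: it splits $V$ into $T_v \ni v$ and $T_w \ni w$. Walk along the path from $v$ to $w$ in $MST(V,d_T)$; since $v \in T_v$ and $w \in T_w$, this path must cross the cut, i.e. it contains an edge $\{x,y\}$ with $x \in T_v$, $y \in T_w$. Because $\{v,w\}$ was chosen of maximum weight among edges of $\mathcal{M}_c \setminus MST(V,d_T)$, and because $\{x,y\}$ lies on the $v$–$w$ path in $MST(V,d_T)$ hence (by cycle/cut exchange reasoning, using that $\{v,w\}$ is the unique edge of $\mathcal{M}_c$ crossing this cut) one can compare $d_T(x,y)$ with $d_T(v,w)$: if $d_T(x,y) > d_T(v,w)$ were possible, swapping would contradict optimality of $MST(V,d_T)$, so $d_T(x,y) < d_T(v,w)$. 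Then I would invoke Lemma \ref{lemma:equivalence}(1) applied to $\{x,y\} \in MST(V,d_T)$, or Lemma \ref{thm:root}, to locate the witnessing triple; the cleanest route is probably to show that the \emph{first} edge $\{v,v_1\}$ on the $v$–$w$ path in $\mathcal{M}_c$ together with some node off it furnishes $u$. Concretely: let $v_1$ be the neighbor of $v$ on the path from $v$ to $w$ \emph{in $\mathcal{M}_c$}, and let $u$ be the node guaranteed by Lemma \ref{lemma:equivalence}(2) for the non-MST edge $\{v,w\}$; I would argue $\{v,u\}$ must actually lie in $\mathcal{M}_c$ by a minimality/exchange argument on $\mathcal{M}_c$ itself — if $\{v,u\} \notin \mathcal{M}_c$ then the $\mathcal{M}_c$-path from $v$ to $u$ contains some edge heavier than $d_T(v,u) < d_T(v,w)$, and iterating this descent (each step strictly decreasing the relevant distance from $v$, as in the proof of Lemma \ref{lemma:greedy-routing}) must terminate at a node $u'$ with $\{v,u'\} \in \mathcal{M}_c$ and still $u' \prec (v,w)$, which is exactly what is needed.

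I expect the main obstacle to be precisely this last point: converting the existential "some MST-edge $\{v,u\}$ with $u \prec (v,w)$" into "some $\mathcal{M}_c$-edge with the same property." The safe strategy is the descent argument: start from the $u$ given by Lemma \ref{lemma:equivalence}(2); if $\{v,u\} \in \mathcal{M}_c$ we are done; otherwise take the neighbor $u''$ of $v$ on the $\mathcal{M}_c$-path toward $u$, observe via the cut-exchange property of the minimum spanning tree $\mathcal{M}_c$ that $d_T(v,u'') < d_T(v,u) < d_T(v,w)$, and also that $d_T(u'',w) < d_T(v,w)$ (here I would lean on Lemma \ref{thm:root} with $v$ playing a role analogous to the "root", or on the triangle-inequality-on-trees computations used in its proof), so $u'' \prec (v,w)$ as well; repeat. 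Since distances from $v$ strictly decrease and $V$ is finite, the process halts, and it can only halt at a node whose edge to $v$ is in $\mathcal{M}_c$. That yields the desired triple $(u,v,w)$ and completes the proof. I would double-check the edge cases where $v$ or $w$ themselves coincide with intermediate nodes, and make sure the "$\{v,w\} \in \mathcal{M}_c$" and "$\{v,u\} \in \mathcal{M}_c$" conclusions are genuinely about $\mathcal{M}_c$ and not about $MST(V,d_T)$.
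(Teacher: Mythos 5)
Your plan takes the complementary starting point to the paper's: you fix an edge $\{v,w\} \in \mathcal{M}_c \setminus MST(V,d_T)$ (heaviest such), whereas the paper fixes an edge in $MST(V,d_T) \setminus \mathcal{M}_c$ and then walks along the $\mathcal{M}_c$-path from $v$ to $w$, locating the first ``turning point'' $v_i$ where the distance to $w$ stops decreasing and applying Lemma~\ref{thm:root} there. That approach never leaves $\mathcal{M}_c$ and needs no exchange argument. Your approach does require an exchange step, and that step is where the proof breaks.

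The descent argument is not sound. Let $u$ be the node supplied by Lemma~\ref{lemma:equivalence}(2), and let $u''$ be $v$'s neighbor on the $\mathcal{M}_c$-path from $v$ to $u$. You claim $d_T(v,u'') < d_T(v,u)$ via the cut/cycle property of $\mathcal{M}_c$, but that property applies only to edges of $G_c^*$, and $\{v,u\}$ is guaranteed to be in $MST(V,d_T)$, not in $E_c^*$. Worse, $u''$ can coincide with $w$: if $u$ lies in the $w$-side of the cut of $\mathcal{M}_c$ induced by removing $\{v,w\}$, the $\mathcal{M}_c$-path from $v$ to $u$ starts with the edge $\{v,w\}$ itself, giving $u''=w$, so $d_T(v,u'') = d_T(v,w) \not< d_T(v,u)$ and there is nothing to descend along. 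A concrete instance: take the underlying tree to be the path $a-b-c-d$ with weights $1,\,10,\,1.5$, so $MST(V,d_T)=\{\{a,b\},\{b,c\},\{c,d\}\}$; take $E_c^*=\{\{a,c\},\{a,d\},\{b,d\}\}$, so $\mathcal{M}_c$ is exactly these three edges. The heaviest edge of $\mathcal{M}_c$ outside the global MST is $\{a,d\}$, so $v=a$, $w=d$. Lemma~\ref{lemma:equivalence}(2) yields $u=b$, but $\{a,b\}\notin\mathcal{M}_c$, and the $\mathcal{M}_c$-path from $a$ to $b$ is $a-d-b$, i.e.\ $u''=d=w$. The lemma still holds here (with $u=c$, since $\{a,c\},\{a,d\}\in\mathcal{M}_c$ and $c\prec(a,d)$), but your construction does not find that witness. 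The companion claim $d_T(u'',w)<d_T(v,w)$ also lacks justification; a vague appeal to ``Lemma~\ref{thm:root} with $v$ as root'' does not give the required inequality. In short, the reduction from ``$u$ with $\{v,u\}\in MST(V,d_T)$'' to ``$u'$ with $\{v,u'\}\in\mathcal{M}_c$'' is exactly the hard part you flagged, and the proposed descent does not carry it.
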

\begin{proof}
    Let $\mathcal{M}_c$ be the minimum spanning tree of a configuration $c$.
    Since the potential is suboptimal,
    there must be two nodes $v,w \in V$ with $\{v,w\} \in MST(V,d_T)\setminus\mathcal{M}_c$.
		Since $\mathcal{M}_c$ is connected, there is a path $v:=v_0, v_1, \dots, v_k:=w$
    from $v$ to $v_k$ in $\mathcal{M}_c$.

    Now consider $v_1$.
    According to Lemma \ref{lemma:equivalence} it cannot hold $v_1 \prec (v,w)$
    because $\{v,w\} \in MST(V,d_T)$.
    Thus, it holds $d_T(v,w) < d_T(v_1,w)$ or $d_T(v,w) < d_T(v_1,v)$.
    Now we distinguish between two cases:
    \begin{enumerate}
    \item Assume, it holds $d_T(v,w) < d_T(v_1,w)$.
    Next, consider that it holds $d_T(v_{k-1},w) > d_T(v_k,w)$
    because no node can be closer to $w = v_k$ than $w$ itself.
    Thus, there must be a first node $v_i$ on the path with $d_T(v_i,w) > d_T(v_{i+1},w)$.
    Since $d_T(v,w) < d_T(v_1,w)$ it further holds that $i \geq 1$.
    Therefore, the node $v_{i-1}$ is well-defined and it must hold $d_T(v_{i-1}, w) < d_T(v_{i},w)$
    because $v_{i+1}$ is the first node that is closer to $w$ than its successor.
		Hence, it holds
    $\big(d_T(v_{i-1},w) < d_T(v_{i},w)\big)$ and $\big(d_T(v_{i+1},w) < d_T(v_{i},w)\big)$
    Following Lemma \ref{thm:root} it follows that either $v_{i-1} \prec (v_{i},v_{i+1})$ or $v_{i+1} \prec (v_{i-1},v_{i})$.
    Since in both cases all of the involved edges are part of $\mathcal{M}_c$,
    the lemma follows.
    \item  Assume, it holds $d_T(v,w)<d_T(v_1,v)$. Then there must be node $v_i$, such that
  $d_T(v_i,v) > d_T(v_{i+1},v)$.
  Otherwise a simple induction from $v_0$ to $v_k$ would yield that $d_T(v,w)<d_T(v_k,v)$.
  Since $v_k = w$ this is a contradiction.
  The rest of the proof is analogous to the previous case. For the first deviator $v_i$ it holds
  $d_T(v_i,v) > d_T(v_{i+1},v)$ and $d_T(v_i,v) > d_T(v_{i-1},v)$ and thus,
  we can apply Lemma \ref{thm:root} to conclude the proof.
    \end{enumerate}
    
\end{proof}

Lemma \ref{lemma:locally_checkable} only made assumptions about edges in $G^*_c$ and did not consider the actual edges.
Since each node only has access to its local references, node $v$ can only perform a delegation if it ever has explicit
references to $u$ and $w$.
In the following lemma, we will see that if the potential does not decrease,
a node will eventually have the references in local memory.

\begin{lemma}
    \label{lemma:eventually_explicit}
    Let the system be in configuration $c \in C$ and let $\mathcal{M}_c$ be the minimum spanning tree of $c$.
    If the potential does not decrease, then the following two statements hold:
    \begin{enumerate}
          \item Every computation that starts in $c$ will reach a set $C_c \subset C$, such that
          \[
              \forall c^* \in C_c: \, \big( \{v,w\} \in \mathcal{M}_{c} \Rightarrow (v,w) \in E^X_{c^*} \big)
          \]
          \item \textsl{Every} succeeding configuration of $c^* \in C_c$ is in $C_c$ as well
    \end{enumerate}
\end{lemma}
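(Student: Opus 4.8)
The plan is to argue that the edges of $\mathcal{M}_c$ behave like a "sticky" subgraph: once an edge $\{v,w\} \in \mathcal{M}_c$ becomes explicit it stays explicit, and each such edge eventually becomes explicit. For the first part I would carefully inspect the protocol in Listing \ref{alg:BuildDG}: when some node $x$ is activated and considers a neighbor $y$ with $(x,y)$ an explicit edge, $x$ only removes $y$ from $N_x$ if it finds a third node $z \in N_x$ with $z \prec (x,y)$, and in that case it delegates $y$ to $z$, which creates the (implicit) edge $(z,y)$. So a delegation replaces the edge $\{x,y\}$ by $\{z,y\}$ with $d_T(z,y) < d_T(x,y)$. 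By the argument in the proof of Lemma \ref{lemma:potential_decreasing}, if $\{x,y\}$ were in the current minimum spanning tree $\mathcal{M}_{c'}$, this swap would strictly decrease the potential — contradicting the hypothesis that the potential does not decrease. Hence, under the no-decrease hypothesis, no edge of the current $\mathcal{M}_{c'}$ is ever delegated away; consequently $\mathcal{M}_{c'}$ does not change across any step (the swap set from Lemma \ref{lemma:potential_decreasing} only touches non-tree edges, so $MST(E', d_T) = MST(E, d_T)$), so we may write $\mathcal{M}_c$ unambiguously for all reachable configurations. This gives the Closure statement (2): if every edge of $\mathcal{M}_c$ is explicit in $c^*$, no activation can delete any of them (deletion only happens via delegation of a tree edge, which we just excluded), and introductions/implicit-to-explicit conversions never remove references — so all edges of $\mathcal{M}_c$ remain explicit, i.e. the successor is again in $C_c$.

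For the Convergence statement (1), I would fix an arbitrary edge $\{v,w\} \in \mathcal{M}_c$ and show it eventually becomes — and, by the closure argument, stays — explicit. At any reachable configuration the (undirected) edge $\{v,w\}$ is present in $E^*$, so the reference to $w$ sits either in $N_v$ or in $v$'s channel, or symmetrically for $v$ at $w$; and by what was shown above this edge is never destroyed by any activation. If the reference is implicit (in a channel), fair message receipt guarantees it is eventually delivered, making the edge explicit at that endpoint. If it is already explicit at, say, $v$ (so $w \in N_v$), then on $v$'s next activation — guaranteed by weakly fair execution — $v$ iterates over $w \in N_v$: since $\{v,w\} \in \mathcal{M}_c$ is never delegated (no $u \prec (v,w)$ witnessing a potential-decreasing swap can be acted on), $v$ executes the \emph{else} branch and introduces itself to $w$, i.e. sends a reference of $v$ to $w$; fair message receipt then makes $(w,v)$ explicit at $w$ as well. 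Iterating this over both endpoints and over all finitely many edges of $\mathcal{M}_c$, and using that each such "promotion to explicit" is permanent, yields a configuration $c^* \in C_c$ in which every edge of $\mathcal{M}_c$ is explicit in \emph{both} directions (in particular $(v,w) \in E^X_{c^*}$ as required).

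The step I expect to be the main obstacle is making the "never delegated away" claim fully rigorous in the asynchronous, non-FIFO setting: I have to rule out that some node $v$, activated with $w \in N_v$ and $\{v,w\} \in \mathcal{M}_c$, finds a $u \in N_v$ with $u \prec (v,w)$ and delegates $w$ anyway. The resolution is exactly the contradiction in Lemma \ref{lemma:potential_decreasing}: such a delegation, combined with the fact that $v$ must also hold a reference to $u$ (so $\{v,u\} \in E^*$), produces either a potential-decreasing swap $\{v,w\} \mapsto \{u,w\}$ when $\{v,w\}$ is a tree edge of the current MST, or — via the $u \in T_w$ case analyzed there — contradicts minimality of the MST directly; either way the no-decrease hypothesis is violated. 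A secondary subtlety is confirming that the minimum spanning tree $\mathcal{M}_{c'}$ is literally the same edge set in every reachable configuration, which is needed so that "$\mathcal{M}_c$" in the statement is well-defined along the whole computation; this again follows from the swap analysis of Lemma \ref{lemma:potential_decreasing}, since under the hypothesis all swaps replace non-tree edges and therefore leave the unique MST unchanged.
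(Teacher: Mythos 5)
Your proposal is correct and follows essentially the same route as the paper's own proof: both use the observation from Lemma \ref{lemma:potential_decreasing} that delegating an $\mathcal{M}_c$-edge would strictly decrease the potential, then combine fair message receipt (implicit $\to$ explicit) with the introduction step (explicit at one endpoint $\to$ implicit at the other) to get both directed copies explicit, and both note that these edges are then never removed. Your extra remark that $\mathcal{M}_{c'}$ is literally the same edge set in every reachable configuration is a nice addition that the paper leaves implicit; just be careful that "only non-tree edges are swapped" does not by itself guarantee $MST(E',d_T)=MST(E,d_T)$ — you also need the fact that the MST weight is unchanged together with uniqueness of the MST (every old MST is still a spanning tree of $E'$ of the same weight), which you do gesture at but should state explicitly.
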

\begin{proof}
    Recall from the proof of Lemma \ref{lemma:potential_decreasing}
    that if \textsl{any} edge is removed from $\mathcal{M}_c$, then the potential decreases.
     If we assume that the potential does not decrease, no edge is ever removed from $\mathcal{M}_c$.

    Now fix an edge $\{v,w\} \in \mathcal{M}_c$.
    This edge exists because $(v,w) \in E^X_c \cup E^T_c$ or $(w,v) \in E^X_c \cup E^T_c$.
    In the following we assume that $(v,w) \in E^X_c \cup E^T_c$, the other case is analogous.
    In order to prove the lemma we must show that if $(v,w) \in E^T_{c}$ it
    will become explicit and there must eventually be an explicit edge $(w,v) \in E^X_{c'}$
    in some later configuration $c' \in C$.
    Further, none these edges is ever delegated as long as the potential does not decrease.
    \begin{enumerate}
        \item If $(v,w) \in E_c^T$ is implicit in $c$, it will eventually be delivered to $v$.
        Since the potential does not decrease, $v$ never performs a delegation
        of an edge that is part of a minimum spanning tree.
        Thus, the system must reach a configuration $c'$ with $(v,w) \in E^X_{c'}$.
        By the same argument, any subsequent configuration $c' \in C$ with the same potential
        must also contain $(v,w) \in E^X_{c'}$.

        \item If $(v,w) \in E^X_c$ is explicit,
        then the edge $(w,v) \in E^T_{c'}$ will eventually be added in some later configuration $c'$ when $v$ is activated.
        This happens, because we assume, that no edge is delegated.
        If $v$ does not delegate $w$, it introduces itself upon its activation.
        Thus, it adds an implicit edge $(w,v)$ that will eventually become explicit.
    \end{enumerate}
    In conclusion:
    For each edge $\{v,w\} \in \mathcal{M}_c$ in a configuration $c$,
    there will eventually be a configuration $c'$ with edges $(v,w),(w,v) \in E^X_{c'}$.
    Since no edge of $\mathcal{M}_c$ is ever delegated as long as the potential is fixed,
    these explicit edges stay part of all subsequent configurations if the potential does not decrease.
\end{proof}
Using this fact we can finally show that the following holds:

\begin{lemma}[Convergence I]
    \label{lemma:convergenceI}
    The following two statements hold:
    \begin{enumerate}
      \item Every computation will reach a set $C_{MST} \subset C$, such that
      \[
          \forall c' \in C_{MST}: \,\,\, \big(\{v,w\} \in MST(V,d_T) \Rightarrow (v,w) \in E^X_{c'}\big)
      \]
      \item \textsl{Every} succeeding configuration of $c' \in C_{MST}$ is in $C_{MST}$ as well.
    \end{enumerate}
\end{lemma}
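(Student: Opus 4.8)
The plan is to show that the potential $\Phi$ reaches its global minimum after finitely many steps and that this minimum is attained exactly on $C_{MST}$. First I observe that, restricting as agreed to computations with $G^*_{c_0}$ connected, every configuration along the computation has a connected $G^*_c$ (introductions and delegations preserve connectivity), so $\Phi(c)=\sum_{e\in\mathcal{M}_c} d_T(e)$ is finite throughout. Since $\mathcal{M}_c$ is a spanning tree of the fixed finite set $V$ and there are only finitely many such trees, $\Phi$ ranges over a finite set of reals. By Lemma~\ref{lemma:potential_decreasing} the values $\Phi(c_t),\Phi(c_{t+1}),\dots$ form a non-increasing sequence, which in a finite value set is eventually constant. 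Hence there is a configuration $c$ after which $\Phi$ never changes; write $\Phi^*$ for its value.

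From $c$ onwards the hypothesis of Lemma~\ref{lemma:eventually_explicit} holds, so every computation starting in $c$ reaches the set $C_c$ in which every edge of $\mathcal{M}_c$ is explicit in both directions, and $C_c$ is closed under succeeding configurations. It therefore suffices to prove that $\mathcal{M}_c = MST(V,d_T)$: then $C_c$ coincides with the set $C_{MST}$ of the statement (it is $\{c^*:\{v,w\}\in MST(V,d_T)\Rightarrow(v,w)\in E^X_{c^*}\}$), the first claim follows since every computation reaches such a $c$ and hence $C_c=C_{MST}$, and the second claim is exactly the closure part of Lemma~\ref{lemma:eventually_explicit}.

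To show $\mathcal{M}_c = MST(V,d_T)$, suppose $\Phi^*$ is not minimum. Since $\Phi$ never decreases after $c$, the argument in the proofs of Lemmas~\ref{lemma:potential_decreasing} and~\ref{lemma:eventually_explicit} implies that no edge of $\mathcal{M}_c$ is ever removed, so the MST of every configuration from $c$ on equals $\mathcal{M}_c$. By Lemma~\ref{lemma:locally_checkable} there are nodes $u,v,w$ with $u\prec(v,w)$ and $\{v,u\},\{v,w\}\in\mathcal{M}_c$, and by definition of $C_c$ the node $v$ holds explicit references to both $u$ and $w$ in every configuration of $C_c$ until it next acts; by weak fairness it is eventually activated. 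In that activation, if the for-loop ever delegates $u$, the $\mathcal{M}_c$-edge $\{v,u\}$ is removed; otherwise $u$ is still in $N_v$ when the loop processes $w$ --- references leave $N_v$ only through delegation, never through an introduction --- and since $u\prec(v,w)$ the node delegates $w$, removing the $\mathcal{M}_c$-edge $\{v,w\}$. Either way an edge of the current MST $\mathcal{M}_c$ is delegated, which by the case analysis in the proof of Lemma~\ref{lemma:potential_decreasing} strictly decreases $\Phi$ --- contradicting that $\Phi$ is constant after $c$. Hence $\Phi^*$ equals the weight of $MST(V,d_T)$, and since edge weights are pairwise distinct this forces $\mathcal{M}_c = MST(V,d_T)$, completing the proof.

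I expect the delicate point to be the contradiction argument in the third paragraph: one must be careful that a single activation of $v$ necessarily delegates some edge of $\mathcal{M}_c$ no matter in which (unspecified) order the for-loop processes $N_v$, and that a reference to $u$ can disappear from $N_v$ only by being delegated (introductions never delete references). Mildly delicate as well is the claim that the global minimum of $\Phi$ over connected configurations is exactly $\sum_{e\in MST(V,d_T)} d_T(e)$; this uses $E^*_c\subseteq V^2$ together with the uniqueness of the MST under pairwise distinct weights.
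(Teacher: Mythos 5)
Your proof is correct and follows essentially the same approach as the paper: both rely on the same chain of lemmas (the potential decreases monotonically, a locally checkable improvement exists whenever the potential is suboptimal, and the relevant references eventually become explicit), and both conclude that the potential reaches its minimum and that $\mathcal{M}_c$ then coincides with $MST(V,d_T)$. You organize the argument slightly differently — establishing first that $\Phi$ is eventually constant via finiteness of the set of spanning trees, and then deriving a contradiction if that constant value were suboptimal — and you spell out more carefully than the paper does why a single activation of $v$ must delegate \emph{some} $\mathcal{M}_c$-edge regardless of the order the for-loop visits $N_v$ and of which relative neighbor is chosen; these are refinements of detail rather than a different route.
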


\begin{proof}

    The proof is structured in two parts: First, we show that the system eventually
    reaches a configuration $c_{min} \in C$ with minimum potential from any initial configuration $c \in C$.
     Second, we elaborate eventually the system will contain all valid edges once it is
     in $c_{min}$ and argue, why all subsequent configurations must also contain all
     valid edges.
    \begin{enumerate}
        \item Let $c \in C$ be an arbitrary configuration with suboptimal potential.
        Further, let $\mathcal{M}_c$ be a minimum spanning tree of that configuration.
        Following Lemma \ref{lemma:locally_checkable}, there must be nodes $u,v,w \in V$ with
        edges, such that
        \[
        u \prec (v,w) \,\, \wedge \,\, \{v,w\} \in \mathcal{M}_c \,\, \wedge \,\, \{v,u\} \in \mathcal{M}_c
        \]
        Since the potential does not decrease, we can apply Lemma \ref{lemma:eventually_explicit}.
        Thus, there will eventually be a configuration $c' \in C$ with
        \[
        (v,u) \in E^X_{c'} \,\, \wedge \,\, (v,w) \in E^X_{c'}
        \]
        This causes $v$ to eventually delegate $w$ to $u$ and decrease the potential (if the potential does not decrease otherwise before).
        Since the potential is lower bounded by the weight of $MST(V,d_T)$ that by definition cannot decrease,
        the system must eventually reach a configuration $c_{min} \in C$ with minimum potential.
        \item
        If the system is in a configuration $c_min \in C$, the potential cannot decrease further.
        Thus, we can apply Lemma \ref{lemma:eventually_explicit} and eventually the system reaches a configuration $c' \in C_{MST}$, such that:
        \[
            \{v,w\} \in MST(V,d_T) \Rightarrow (v,w) \in E^X_{c'}
        \]
        Hence, the configuration $c'$ contains all valid edges. Further, Lemma \ref{lemma:eventually_explicit}
        states that these edges are not delegated as long as the potential does not decrease.
        Since the potential is minimum, it can never decrease and therefore the statement follows.
    \end{enumerate}
    Hence, starting in any configuration $c \in C$ the system will eventually reach a configuration $c' \in C_{MST}$ with all valid edges.
    Further, all subsequent configurations of $c'$ are in $C_{MST}$ as well.
    This was to be shown.
\end{proof}

This concludes the first part of the convergence proof.
Now we know that the system eventually converges to a superset of the \DG.
It remains to show that eventually all invalid edges will vanish.

\begin{lemma}[Convergence II]
    \label{lemma:convergenceII}
    The following two statements hold:
    \begin{enumerate}
    \item Eventually each computation will reach a set of configurations $C' \subset C$, such that
    \[
        \forall c \in C': \,\,\, \big(\{v,w\} \not\in MST(V,d_T) \Rightarrow \{v,w\} \not\in E^*_{c}\big)
    \]
    \item \textsl{Every} succeeding configuration of $c' \in C'$ is in $C'$ as well.
    \end{enumerate}
\end{lemma}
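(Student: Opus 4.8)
The plan is to bootstrap from Lemma~\ref{lemma:convergenceI}: by Convergence~I every computation eventually enters the set $C_{MST}$ and never leaves it, and in $C_{MST}$ every node already holds \emph{explicit} references to all of its MST-neighbours (since $\{v,w\}\in MST(V,d_T)$ forces $(v,w),(w,v)\in E^X$). So it suffices to analyse computations once they are inside $C_{MST}$. I would first record two structural facts. (i) If $\{v,w\}$ is \emph{invalid} and $w\in N_v$ when $v$ is activated, then the test $\exists u\in N_v:u\prec(v,w)$ succeeds: Lemma~\ref{lemma:equivalence}(2) gives some $u$ with $u\prec(v,w)$ and $\{v,u\}\in MST(V,d_T)$, and in $C_{MST}$ this $u$ lies in $N_v$; hence $v$ \emph{delegates} $w$ to some such $u$, and $u\prec(v,w)$ forces $d_T(u,w)<d_T(v,w)$. (ii) If $\{v,w\}$ is \emph{valid}, then by Lemma~\ref{lemma:equivalence}(1) no $u\prec(v,w)$ exists at all, so $v$ never delegates $w$; it takes the else branch and merely introduces itself. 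Consequently, while in $C_{MST}$, the only way an \emph{invalid} reference is ever created is by a delegation, and a delegation replaces a reference of some weight $\omega$ by one of strictly smaller weight.

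Next I would introduce, for $c\in C_{MST}$, the potential $W_c:=\max\{d_T(v,w):\{v,w\}\in E^*_c\text{ and }\{v,w\}\notin MST(V,d_T)\}$, with the convention $W_c:=0$ if $c$ has no invalid edge. I claim $W_c$ is non-increasing along any computation inside $C_{MST}$: in a step from $c$ to $c'$, the messages delivered to the activated node merely turn implicit edges into explicit ones and add no new undirected edge; a delegation of an invalid $\{v,w\}$ adds only the edge $\{u,w\}$ of weight $d_T(u,w)<d_T(v,w)\le W_c$ by~(i); and an introduction adds, by~(ii), only a valid edge. Hence $W_{c'}\le W_c$, and in particular no reference whose edge is invalid and has weight exceeding $W_c$ occurs in any configuration reachable from $c$.

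Then I would prove that $W_c$ strictly decreases within finitely many steps whenever $W_c=W^\ast>0$. Fix $c\in C_{MST}$ with $W_c=W^\ast$ and consider the finitely many directed references --- entries of some $N_x$ or messages sitting in some channel --- that carry weight $W^\ast$ and belong to an invalid edge. By weak fairness each such entry is eventually processed, and by fair message receipt each such message is eventually delivered and then processed in that activation; by~(i) processing it \emph{delegates} it, which deletes it and creates only references of weight $<W^\ast$. Moreover no new reference of invalid weight $W^\ast$ can ever appear afterwards: a fresh invalid reference stems only from a delegation, which strictly decreases weight, while the only references of weight $>W^\ast$ are valid and, by~(ii), are never delegated. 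Hence after finitely many steps $W_{c'}<W^\ast$. Since $W_c$ takes values in the finite set $\{0\}\cup\{d_T(e):e\notin MST(V,d_T)\}$, is non-increasing, and strictly decreases whenever positive, it eventually equals $0$ and stays there; thus every computation reaches and remains in $C':=\{c\in C_{MST}:W_c=0\}$, which is exactly the set in the statement. Closure is immediate: from $W_c=0$ no node has a positive-weight invalid edge to delegate, and introductions create only valid edges, so $W_{c'}=0$ as well.

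The step I expect to be the main obstacle is the strict-decrease argument, because it must be carried out at the granularity of individual directed references --- entries of the $N_v$'s \emph{and} the messages in the channels --- rather than at the level of the undirected graph $E^*_c$: a single invalid edge can \enquote{bounce} between a node's local memory and its channels, so the coarse quantities (number of invalid edges, or total weight of invalid edges) need not be monotone. What makes the potential $W_c$ work is precisely the interplay of \enquote{delegation strictly decreases the weight of the moved reference} with \enquote{valid --- hence possibly heavier --- edges are never delegated}, which together forbid the reappearance of any reference of the current maximal invalid weight; the remaining ingredients are routine applications of the fairness assumptions.
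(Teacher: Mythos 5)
Your argument is correct and mirrors the paper's: both define the potential as the maximum weight of an invalid undirected edge, show it is non-increasing under the protocol's two actions, and prove strict decrease by accounting for individual directed references (your per-reference reasoning corresponds to the paper's $\theta_v$) while noting that, once all MST edges are explicit, no fresh reference of the current maximal invalid weight can ever be created. The only minor difference is that you restrict the non-increase claim to $C_{MST}$ whereas the paper proves it for arbitrary configurations --- a purely cosmetic distinction since both arguments ultimately only invoke it after Convergence~I has been established.
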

\begin{proof}
    For this proof,
    we will again employ a potential function.
    The potential of a configuration $c \in C$ is the weight of the longest invalid edge.
    Formally:
    \[
      \tilde{\Phi}(c) := \begin{cases}
		\max_{e \in E^*_{c}\setminus MST(V,d_T) } d_T(e)& \textit{if  } E^*_{c}\setminus MST(V,d_T) \neq \emptyset\\      
      	0& \textit{else}\\
      \end{cases}
    \]
    If this potential is $0$, there are no invalid edges left.
    This trivially follows from the fact that all distances are greater than zero.
    Just as with the other potential, we will show that this potential (1) never increases and (2) will decrease as long as it is not minimum.
    \begin{enumerate}

        \item $\tilde{\Phi}(c)$ \textsl{cannot increase.}\\
        For the proof let $c \in C$ be an arbitrary configuration and $c' \in C$ be any succeeding configuration of $c$.
        To prove the assumption, we show that the protocol never adds an invalid edge that is longer than any existing edge.
        Let $v \in V$ be the node that is activated in the transition from $c$ to $c'$ and let
        $w \in V$ be an explicit neighbor of $v$ in $G_c$.
        Then $v$ performs one of the following two actions that add new edges to the system:
        \begin{enumerate}
            \item If $v$ introduces itself to $w$,
            it adds the implicit edge $(w,v) \in E^T_{c'}$ to the system.
            Since the edge $(v,w) \in E^X_c$ with $d_T(v,w) = d_T(w,v)$ is already present, this cannot raise the potential.
            \item If $v$ delegates $w$ to some node $u \in V$,
            then it adds the implicit edge $(u,w) \in E^T_{c'}$ to system if it was not already present before.
            Since for delegation it must hold that $d_T(u,w)<d_T(v,w)$ for the existing edge $(v,w) \in E^X_c$,
            the new edge cannot raise the potential.
        \end{enumerate}
        Thus, it holds $\tilde{\Phi}(c') \leq \tilde{\Phi}(c)$.

        \item $\tilde{\Phi}(c)$ \textsl{will eventually decrease if $\tilde{\Phi}(c)>0$.}\\
        Let $c \in C$ be an arbitrary configuration and $\{v,w\} \in E^*_{c}$ an invalid edge in $c$
        with $\tilde{\Phi}(c) = d_T(v,w)$.
        Since $\{v,w\}$ is oblivious of the true edge's direction,
        both $(v,w)$ and $(w,v)$ could be part of the configuration.
        Since the proof is analogous for both edges, we will only consider $(v,w)$
        and show that all instances of this edge will eventually be delegated.

        First, consider the case that $v$ has an explicit edge to $w$.
        Since we assume the system is in a configuration that contains all edges in $MST(V,d_T)$,
        we can use Lemma \ref{lemma:equivalence}.
        According to the Lemma, there must be a node $u \in V$ with an explicit edge $(v,u) \in E^X_c$ and $u \prec (v,w)$.
        Thus, $v$ will delegate $w$ to $u$ upon activation and add the edge $(u,w)$ with $d_T(u,w)<d_T(v,w)$.

        Second, consider the case that $(v,w) \in E^T_c$ is implicit.
        For the proof, we need to mind that there can be multiple instances of the reference to $w$ in $v$'s channel.
        The potential will only sink once all of these instances are gone.
        Therefore let $\theta_v$ be the number of references to $w$ in $v$'s channel.
        In the following, we will show that $\theta_v$ decreases to $0$.
        Note that $\theta_v$ can only be raised if some node $u \in V$ delegates $w$ to $v$ or $w$ introduces itself.
        A delegation always implies that some node $u$ has a reference to $w$ and it holds $d_T(u,w) > d_T(v,w)$.
        In that case, there exists an invalid edge $\{u,w\} \in E^*_c$, which is longer than $\{v,w\}$.
        This is impossible because $\{v,w\}$ is by assumption the longest invalid edge.
        Hence, $\theta_v$ may only increase if $w$ introduces itself.
        To do this, there must be an explicit edge $(w,v)$.
        However, we can apply the same argumentation as above for $(v,w)$
        and see that $w$ must delegate its reference of $v$ to some other node $u' \in V$
        instead of introducing itself.
        In summary, the protocol never increases $\theta_v$ and thus it can only decrease if a reference
        is delivered to $v$.
        Since this eventually happens to every reference,
        the system will reach a configuration with no references of $w$ in $v$'s channel.
    \end{enumerate}
    Hence, the potential will eventually reach $0$ and no more invalid edges are left.
    Furthermore, no more invalid edges can ever be added as this would increase the potential.
\end{proof}

Thus, we have shown that starting from any weakly connected initial configuration $c \in C$ the system will converge to a superset of the \DG and eventually to a legal configuration.
This is the combined result of Lemmas \ref{lemma:convergenceI} and \ref{lemma:convergenceII}.
To complete the proof we must show that the system once it is legal never leaves the set of legal configurations.
Formally:
\begin{lemma}[Closure]
    \label{lemma:closure}
    Let the system be in a legal configuration $c \in \mathcal{L}_{MST}$,
     then every succeeding configuration $c' \in C$ is also legal.
\end{lemma}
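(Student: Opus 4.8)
The plan is to show that starting from a legal configuration $c \in \mathcal{L}_{MST}$, neither of the two defining conditions of Definition~\ref{def:legal_rng} can be violated in the successor configuration $c'$. By the reduction at the start of Section~\ref{sec:analysis}, it suffices to treat a single connected component, so assume $G^*_c$ is connected and $E^X_c$ (viewed undirected) equals $MST(V,d_T)$, while every implicit edge is also in $MST(V,d_T)$. Let $v$ be the unique node activated in the step from $c$ to $c'$, and examine its for-loop over $N_v$. I would argue that for \emph{every} explicit neighbor $w$ of $v$ we are in the ``else'' branch: since $(v,w)\in E^X_c$ means $\{v,w\}\in MST(V,d_T)$, Lemma~\ref{lemma:equivalence}(1) gives $\nexists u\in V:\, u\prec(v,w)$, and in particular no such $u$ exists in $N_v$. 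Hence $v$ never delegates; it only introduces itself along already-present MST edges. Consequently no explicit edge is ever removed, and the only edges added are implicit copies $(w,v)$ of existing MST edges $\{v,w\}$ — which keeps both conditions of Definition~\ref{def:legal_rng} intact.

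The key steps, in order, are: (i) invoke the single-component reduction and fix the activated node $v$; (ii) show using Lemma~\ref{lemma:equivalence}(1) that for each $w\in N_v$ the delegation test $\exists u\in N_v: u\prec(v,w)$ fails, so $v$ takes the introduction branch; (iii) conclude that $E^X_{c'}\supseteq E^X_c$ (no deletions) and that any newly created edge is the reverse $(w,v)$ of an explicit edge $(v,w)$ with $\{v,w\}\in MST(V,d_T)$, hence remains a legal implicit edge; (iv) also verify the ``only if'' direction of Condition~1, i.e.\ that $c'$ does not \emph{gain} an explicit edge outside $MST(V,d_T)$ — but since references become explicit only when a node processes an implicit edge in its channel, and every implicit edge in $c$ (and every implicit edge added during this step) lies in $MST(V,d_T)$, no illegal explicit edge can appear; (v) combine to get $c'\in\mathcal{L}_{MST}$, and note the argument holds for every choice of activated node and delivered message set, so \emph{every} successor is legal.

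I expect the main obstacle to be step (iv): one must be slightly careful that ``explicit edges form exactly the MST'' is preserved in \emph{both} directions. The ``if'' direction (every MST edge is explicit) could in principle fail if $v$ delegated an MST edge, but step (ii) rules that out; the ``only if'' direction (no non-MST explicit edge) requires knowing that the only way an edge becomes explicit is by delivery of an implicit edge, together with the inductive invariant that all implicit edges in a legal configuration are MST edges — so an implicit-to-explicit transition can only produce an MST edge, and introductions performed by $v$ only ever add MST edges to channels. A minor subtlety is that $v$'s own local set $N_v$ in configuration $c$ already equals $v$'s incident MST edges, so the quantifier ``$\exists u \in N_v$'' in the delegation test ranges over a subset of $V$, and Lemma~\ref{lemma:equivalence}(1)'s global non-existence statement applies a fortiori; this should be stated explicitly to avoid any gap. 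Once these points are made, the closure claim follows without further computation.
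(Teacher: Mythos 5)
Your proof is correct, and it takes a genuinely different route from the paper. The paper disposes of closure in one sentence, calling it a ``direct corollary'' of the closure clauses already established inside Lemmas~\ref{lemma:convergenceI} and~\ref{lemma:convergenceII}: a legal configuration lies in $C_{MST}\cap C'$ (minimum $\Phi$-potential and $\tilde\Phi=0$), and the two closure statements proved there show membership in each set is preserved, hence legality is preserved. That argument is terse and depends on the reader unpacking the potential-function machinery. Your proof, by contrast, is a direct, self-contained trace of a single step: every $w\in N_v$ (including any freshly delivered MST neighbor) satisfies $\{v,w\}\in MST(V,d_T)$, so Lemma~\ref{lemma:equivalence}(1) guarantees the delegation test fails and $v$ only introduces itself along MST edges; new implicit edges are reverses of MST edges, and implicit-to-explicit transitions can only promote MST edges since a legal configuration's channels contain only MST references. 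What your route buys is transparency and independence from the potential functions --- it explains \emph{why} closure holds at the level of the protocol's branching, which the paper's one-liner does not. What the paper's route buys is brevity and the reuse of invariants it has already paid for. One small remark: your step (iii) claim $E^X_{c'}\supseteq E^X_c$ can actually be sharpened to $E^X_{c'}=E^X_c$, since every reference that could be delivered to $v$ is already in $N_v$ in a legal configuration; this makes the ``no illegal explicit edge appears'' part of step (iv) immediate.
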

However, the lemma is a direct corollary of Lemmas \ref{lemma:convergenceI} and \ref{lemma:convergenceII}.
Hence, \BG is self-stabilizing with regard to Definition \ref{def:self-stabilization}.
This proves Theorem \ref{theorem:self_stabil} and concludes the analysis of the protocol's correctness.


It remains to analyze how many steps are needed until a legal configuration is reached.
Therefore, we adapt the notion of \emph{asynchronous rounds} from \cite{Dolev00}.
Each computation can be divided into rounds $R_0, \dots, R_t$ with $t \to \infty$, such that
each round $R_i$ consists of a finite sequence of consecutive configurations.
Let $c_i$ be the first configuration of $R_i$, then the rounds in the first configuration, such that:
\begin{enumerate}
    \item
    For each $v\in V$, all messages that are in $v$'s channel in configuration $c_i$
    have been delivered at any of $v$'s activations in this round.
    \item
    All nodes have been activated at least once.
\end{enumerate}
Since we assume weakly fair action execution
and
fair message receipt rounds are well-defined. Using this definition, we can show the following.
\begin{theorem}
    \label{lemma:complexity}
    \BG needs $\mathcal{O}(N^2)$ asynchronous rounds to converge to a legal configuration.
\end{theorem}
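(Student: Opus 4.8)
The plan is to bound separately the two phases of convergence established in Lemmas~\ref{lemma:convergenceI} and~\ref{lemma:convergenceII} and to show that each of them lasts $\mathcal{O}(N^2)$ rounds; throughout set $M^\star := MST(V,d_T)$.

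\emph{Phase I (reaching $C_{MST}$).} The crucial structural fact is that $M^\star\cap E^*_c$ is non-decreasing along any computation: by Lemma~\ref{lemma:equivalence}(1) no node ever delegates a reference corresponding to an edge of $M^\star$, so once such an edge lies in $E^*_c$ it lies in $E^*_{c'}$ for every later $c'$; and since $M^\star$ is the unique minimum spanning tree of the complete graph on $V$, one checks that $\Phi(c)$ is minimal precisely when $M^\star\subseteq E^*_c$, i.e., when $|M^\star\cap E^*_c| = N-1$. Hence it suffices to prove (i) that as long as $\Phi$ is not minimal it strictly decreases within $\mathcal{O}(1)$ rounds, and (ii) that as long as some edge of $M^\star$ is missing a new edge of $M^\star$ enters $E^*$ within $\mathcal{O}(N)$ rounds; given (ii), Phase~I then needs at most $(N-1)\cdot\mathcal{O}(N)=\mathcal{O}(N^2)$ rounds.

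For (i) I would quantify Lemmas~\ref{lemma:locally_checkable} and~\ref{lemma:eventually_explicit}: unless $\Phi$ has already dropped, within $\mathcal{O}(1)$ rounds all $N-1$ edges of $\mathcal{M}_c$ become explicit at both endpoints (an implicit such edge is explicit after its holder's next activation, whereupon the holder introduces itself, so the reverse edge is explicit one round later, and no edge of $\mathcal{M}_c$ is delegated while $\Phi$ stays fixed), after which in its next activation the node $v$ supplied by Lemma~\ref{lemma:locally_checkable} delegates one of the $\mathcal{M}_c$-edges $\{v,w\}$ or $\{v,u\}$, which by the computation in the proof of Lemma~\ref{lemma:potential_decreasing} strictly lowers $\Phi$. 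For (ii) the idea is that a reference to a node $y$ stored at a node $x$ with $\{x,y\}\notin M^\star$ moves strictly ``downhill'' towards $y$: by (the proof of) Lemma~\ref{lemma:equivalence}(2) the first node $u$ on the $M^\star$-path from $x$ to $y$ satisfies $u\prec(x,y)$ and $\{x,u\}\in M^\star$, so once $x$ also holds $u$ it delegates $y$ to $u$, and by Lemma~\ref{lemma:greedy-routing} the distance from the current holder to $y$ strictly decreases with each such delegation; as this distance ranges over a set of $N-1$ values, after at most $N-2$ delegations the reference rests at an endpoint of an $M^\star$-edge, where it is kept forever. Together with the observation that introductions spread every node's own reference to its explicit neighbours in each round, this yields that while $|M^\star\cap E^*_c|<N-1$ a new $M^\star$-edge is realised within $\mathcal{O}(N)$ rounds.

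\emph{Phase II (eliminating invalid edges).} Here I would reuse the potential $\tilde\Phi$ from the proof of Lemma~\ref{lemma:convergenceII}, which never increases and for which no invalid edge of weight $\ge\tilde\Phi(c)$ is ever created. Once every edge of $M^\star$ is explicit (and stays so by Lemma~\ref{lemma:convergenceI}), an explicit invalid edge $\{v,w\}$ with $d_T(v,w)=\tilde\Phi(c)$ is delegated to a strictly shorter edge the next time $v$ is activated, since the node $u$ with $u\prec(v,w)$ supplied by Lemma~\ref{lemma:equivalence}(2) is an $M^\star$-neighbour of $v$, hence lies in $N_v$ and is never delegated away; implicit copies of such edges are delivered within one round and handled likewise. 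Thus $\tilde\Phi$ strictly decreases within $\mathcal{O}(1)$ rounds whenever positive, and since it takes values only in $\{0\}\cup\{\,d_T(e)\mid e\in V^2\,\}$, a set of at most $\binom{N}{2}+1$ elements, Phase~II ends within $\mathcal{O}(N^2)$ rounds; adding the two bounds gives the claim. The step I expect to be the main obstacle is~(ii) of Phase~I: while references are still in transit the tree $M^\star$ is only partially present, so the ``highways'' used by the downhill delegations are incomplete and a reference may temporarily stall at a node that does not yet hold the required $M^\star$-neighbour, and making the $\mathcal{O}(N)$-per-edge bound rigorous needs a careful argument --- leaning on the monotonicity of $M^\star\cap E^*_c$ --- that enough of the relevant $M^\star$-edges are already present for the next one to form in time.
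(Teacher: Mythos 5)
Your Phase~II argument matches the paper's: both identify that the invalid-edge potential $\tilde\Phi$ takes at most $\binom{N}{2}+1$ values, never increases, and strictly drops within $\mathcal{O}(1)$ rounds because once $M^\star \subseteq E^*_c$ the delegation target $u \prec (v,w)$ is already a stable $M^\star$-neighbour of $v$, so $v$ can always act.

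Phase~I is where you diverge, and where your argument, as you yourself flag, has a genuine gap. The paper never tracks individual references or the realization of individual $M^\star$-edges. It works entirely with the potential $\Phi$ of Definition~\ref{def:potential}: it asserts that $\Phi$ can decrease at most $\mathcal{O}(N^2)$ times and then shows that, while $\Phi$ is suboptimal, the triple $(u,v,w)$ supplied by Lemma~\ref{lemma:locally_checkable} --- whose two relevant edges $\{v,u\},\{v,w\}$ lie in the \emph{current} tree $\mathcal{M}_c$, hence are present in $G_c^*$ by construction --- forces a $\Phi$-reducing delegation within $4$ rounds. The crucial point is that Lemma~\ref{lemma:locally_checkable} is phrased in terms of $\mathcal{M}_c$, not $M^\star$: the two references the delegating node needs are already in $G^*_c$ and, by Lemma~\ref{lemma:eventually_explicit}, become explicit in $\mathcal{O}(1)$ rounds; there is no incomplete highway to stall on. Your subclaim~(i) is in fact precisely this per-decrease argument of the paper, but you lack the companion bound on the \emph{number} of $\Phi$-decreases, which is why you reach for~(ii) instead.

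Your~(ii) is the step that fails. Tracking a reference to $y$ moving downhill along the $M^\star$-path presupposes that the current holder $x$ already has the relevant $M^\star$-neighbour $u$ in $N_x$; if $\{x,u\}\notin E^*_c$ yet, $x$ may simply introduce itself to $y$ and the chain stalls, so the $\mathcal{O}(N)$-rounds-per-edge bound does not go through. Moreover the protocol delegates $y$ to an \emph{arbitrary} relative neighbour in $N_x$, not necessarily one on the $M^\star$-path, so even the claim that after $N-2$ hops the reference "rests at an endpoint of an $M^\star$-edge" needs more than Lemma~\ref{lemma:greedy-routing}. You correctly anticipate this obstacle, but it is the load-bearing step of your Phase~I and is not patched. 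The repair is to discard~(ii): keep your~(i), which reproduces the paper's $4$-round per-decrease bound over $\mathcal{M}_c$, and combine it with the paper's (asserted) $\mathcal{O}(N^2)$ bound on the number of $\Phi$-decreases.
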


\begin{proof}
  For the proof, we show that (1) it takes at most $\mathcal{O}(N^2)$ rounds until all valid edges
  are added and (2) it takes another $\mathcal{O}(N^2)$ rounds for all invalid edges to vanish.
  \begin{enumerate}
    \item
    Consider the potential function in Definition \ref{def:potential} and note that it can lower at most $\mathcal{O}(N^2)$ times.
    In the following we show that the potential will reduce at least every $4$ rounds.
    Therefore, assume that the system is in configuration $c \in C$ which is part of round $R_i$.
    According to Lemma \ref{lemma:locally_checkable}, there are nodes $u,v,w \in V$ with $\{v,u\} \in \mathcal{M}_c$ and $\{v,w\} \in \mathcal{M}_c$
    such that the potential is reduced if $v$ delegates $w$ to $u$.
    If $v$ has the references to both these nodes in its local memory,
    it will perform the delegation upon its next activation.
    This will happen at latest in round $R_{i+1}$ because each node must be activated at least once every round.
    If $v$ does not have the references in its local memory, it will eventually receive
    them both (cf. Lemma \ref{lemma:eventually_explicit}) if the potential does not decrease.
    Assume w.l.o.g. that $w$ is the the latter of the two nodes whose reference is delivered to $v$.
    In the following, we will bound the number of rounds until the system is a configuration $c' \in C$ with $(v,w) \in E^X_{c'}$
    if potential the potential does not reduce.
    Therefore, we make the following observations:
    \begin{enumerate}
      \item If $(v,w) \in E_c^T$ is implicit in some round $R_{j}$,
      it will eventually be explicit in round $R_{j+1}$.
        \item If $(w,v) \in E^X_c$ is explicit in some round $R_j$,
        then the implicit edge $(v,w) \in E^T_{c'}$ will be added in round $R_{j+1}$
        or the potential decreases.
        Note that each node must be activated at least once every round.
        If $w$ delegates $v$, then the potential decreases because $\{v,w\} \in \mathcal{M}_c$.
        If $v$ does not delegate $w$, it introduces itself.
        Thus, $w$ adds the implicit edge $(v,w)$ in round $R_{j+1}$.
        \item If $(v,w) \in E_c^T$ is implicit in some round $R_j$, it becomes explicit in round $R_{j+1}$.
        This happens because we assume that each message will be delivered within one round.
    \end{enumerate}
    By looking at the possible combination of these cases,
    we can see that the potential reduces at latest in round $R_{i+4}$.
    Together with the fact that the potential can reduce at most $\mathcal{O}(N^2)$ times,
    our statement follows.
    \item
    Consider a configuration that already contains all valid edges and let
  the corresponding round be $R_i$. We will now show that the potential defined
    in the proof of Lemma \ref{lemma:convergenceII} reduces after at most two rounds.
    Therefore, consider the longest invalid edge $\{v,w\} \not\in MST(V,d_T)$
    that is present in first configuration of round $R_i$.
    Until the end of round $R_i$ all instances of the edge
    became explicit if they were not already.
    Thus, in the first configuration of round $R_{i+1}$, the according references
    are in the local memory of $v$ or $w$
    and there are no more implicit instances of the edge.
    The proof of Lemma \ref{lemma:convergenceII} also suggests that no further implicit
    instances are ever added. Thus, it remains to show that all explicit instances
    are removed in $R_{i+1}$.
    In the following, assume that a reference of $w$ is in $v$'s memory.
    The other case is analogous.
    According to Lemma \ref{lemma:equivalence} there is a node $u \in N_v$ with $u \prec (v,w)$.
    That means, $v$ delegates $w$ to another node
    once it is activated in round $R_{i+1}$.
    Thus, at the end of round $R_{i+1}$ all instances of $\{v,w\}$ are gone
    and the potential must decrease.
    Together with the fact that the potential can decrease at most $\mathcal{O}(N^2)$ times,
    our statement follows.
  \end{enumerate}
  Thus, after $\mathcal{O}(N^2)$ rounds the system contains all valid and no invalid edges and therefore
  is in a legal configuration.
\end{proof}

\section{Conclusion \& Outlook}
In this work, we focused on designing and analysing self-stabilizing overlay networks that take into account the underlay.
For the tree metric we considered, it turns out that there is an extremely simple protocol for MST construction that naturally follows from some general properties of MSTs in such tree metrics (notice the close relation between Lemma~\ref{lemma:equivalence} and the protocol).
Considering different kinds of underlays (such as planar graphs or graphs with bounded growth) as well as other types of overlays than a minimum spanning tree may be possible next steps.
Of course, the high upper bound on the running time of our algorithm naturally raises the question whether a better running time can be achieved by a more sophisticated algorithm or a refined analysis.
Thus, improving on our results may also be a possible next step.

\bibliography{icalp2018-ref}

\begin{thebibliography}{10}

\bibitem{ieee_802_1d}
{IEEE 802.1D Standard}.
\newblock {http://standards.ieee.org/getieee802/download/802.1D-1998.pdf}.

\bibitem{abraham2004land}
Ittai Abraham, Dahlia Malkhi, and Oren Dobzinski.
\newblock {LAND:} stretch {(1} + epsilon) locality-aware networks for dhts.
\newblock In {\em $15^{th}$ Annual {ACM-SIAM} Symposium on Discrete Algorithms,
  New Orleans, Louisiana, USA, January 11-14, 2004}, pages 550--559.

\bibitem{Abu-Ata14}
Muad Abu{-}Ata and Feodor~F. Dragan.
\newblock Metric tree-like structures in real-world networks: an empirical
  study.
\newblock {\em Networks}, 67(1):49--68, 2016.

\bibitem{Adcock13}
Aaron~B. Adcock, Blair~D. Sullivan, and Michael~W. Mahoney.
\newblock Tree-like structure in large social and information networks.
\newblock In {\em $13^{th}$ International Conference on Data Mining, Dallas,
  TX, USA, December 7-10, 2013}, pages 1--10.

\bibitem{Al-Fares08}
Mohammad Al-Fares, Alexander Loukissas, and Amin Vahdat.
\newblock A scalable, commodity data center network architecture.
\newblock In {\em ACM 2008 Conference on Data Communication}, pages 63--74,
  2008.

\bibitem{Gheorge97}
Gheorghe Antonoiul and Pradip~K. Srimani.
\newblock Distributed self-stabilizing algorithm for minimum spanning tree
  construction.
\newblock In Christian Lengauer, Martin Griebl, and Sergei Gorlatch, editors,
  {\em Euro-Par'97 Parallel Processing}, pages 480--487. Springer Berlin
  Heidelberg, 1997.

\bibitem{Arregoces03}
Mauricio Arregoces and Maurizio Portolani.
\newblock {\em Data Center Fundamentals}.
\newblock Cisco Press, 2003.

\bibitem{Blin10}
L{\'{e}}lia Blin, Shlomi Dolev, Maria~Gradinariu Potop{-}Butucaru, and Stephane
  Rovedakis.
\newblock Fast self-stabilizing minimum spanning tree construction - using
  compact nearest common ancestor labeling scheme.
\newblock In {\em $24^{th}$ International Symposium on Distributed Computing,
  Cambridge, MA, USA, September 13-15, 2010}, pages 480--494.

\bibitem{Blin09}
L{\'{e}}lia Blin, Maria Potop{-}Butucaru, Stephane Rovedakis, and
  S{\'{e}}bastien Tixeuil.
\newblock A new self-stabilizing minimum spanning tree construction with
  loop-free property.
\newblock In {\em $23^{rd}$ International Symposium on Distributed Computing,
  Elche, Spain, September 23-25, 2009.}, pages 407--422.

\bibitem{BlinSSS10}
L{\'{e}}lia Blin, Maria~Gradinariu Potop{-}Butucaru, Stephane Rovedakis, and
  S{\'{e}}bastien Tixeuil.
\newblock Loop-free super-stabilizing spanning tree construction.
\newblock In {\em $12^{th}$ International Symposium on Stabilization, Safety,
  and Security of Distributed Systems, New York, NY, USA, September 20-22,
  2010.}

\bibitem{Montgolfier11}
Fabien de~Montgolfier, Mauricio Soto, and Laurent Viennot.
\newblock Treewidth and hyperbolicity of the internet.
\newblock In {\em $10^{th}$ {IEEE} International Symposium on Networking
  Computing and Applications, {NCA} 2011, August 25-27, 2011, Cambridge, MA,
  {USA}}, pages 25--32.

\bibitem{Dolev00}
Shlomi Dolev.
\newblock {\em Self-stabilization}.
\newblock MIT Press, Cambridge, MA, USA, 2000.

\bibitem{Feldmann}
Michael Feldmann and Christian Scheideler.
\newblock A self-stabilizing general de bruijn graph.
\newblock In {\em $19^{th}$ International Symposium of Stabilization, Safety,
  and Security of Distributed Systems, Boston, MA, USA, November 5-8, 2017}.

\bibitem{Gmyr2016}
Robert Gmyr, Jonas Lef{\`{e}}vre, and Christian Scheideler.
\newblock Self-stabilizing metric graphs.
\newblock In {\em $18^{th}$ International Symposium of Stabilization, Safety,
  and Security of Distributed Systems, Lyon, France, November 7-10, 2016},
  pages 248--262.

\bibitem{Gross12}
Christian Gross, Dominik Stingl, Bj{\"{o}}rn Richerzhagen, Andreas Hemel, Ralf
  Steinmetz, and David Hausheer.
\newblock Geodemlia: {A} robust peer-to-peer overlay supporting location-based
  search.
\newblock In {\em $12^{th}$ {IEEE} International Conference on Peer-to-Peer
  Computing, {P2P}, 2012, Tarragona, Spain, September 3-5, 2012}, pages 25--36.

\bibitem{Higham01}
Lisa Higham and Zhiying Liang.
\newblock Self-stabilizing minimum spanning tree construction on
  message-passing networks.
\newblock In Jennifer Welch, editor, {\em Distributed Computing}, pages
  194--208. Springer Berlin Heidelberg, 2001.

\bibitem{Skip+}
Riko Jacob, Andr{\'{e}}a~W. Richa, Christian Scheideler, Stefan Schmid, and
  Hanjo T{\"{a}}ubig.
\newblock A distributed polylogarithmic time algorithm for self-stabilizing
  skip graphs.
\newblock In {\em $28^{th}$ Annual {ACM} Symposium on Principles of Distributed
  Computing, Calgary, Alberta, Canada, August 10-12, 2009}.

\bibitem{JACOB2012137}
Riko Jacob, Stephan Ritscher, Christian Scheideler, and Stefan Schmid.
\newblock A self-stabilizing and local delaunay graph construction.
\newblock In {\em $20^{th}$ International Symposium of Algorithms and
  Computation, {ISAAC} 2009, Honolulu, Hawaii, USA, December 16-18, 2009.},
  pages 771--780.

\bibitem{Jaromczyk92}
Jerzy~W. Jaromczyk and Godfried~T. Toussaint.
\newblock Relative neighborhood graphs and their relatives.
\newblock {\em Proceedings of the IEEE}, 80(9):1502--1517, 1992.

\bibitem{Korman}
Amos Korman, Shay Kutten, and Toshimitsu Masuzawa.
\newblock Fast and compact self stabilizing verification, computation, and
  fault detection of an {MST}.
\newblock In {\em $30^{th}$ Annual {ACM} Symposium on Principles of Distributed
  Computing, San Jose, CA, USA, June 6-8, 2011}, pages 311--320.

\bibitem{Leiserson85}
C.~E. Leiserson.
\newblock Fat-trees: Universal networks for hardware-efficient supercomputing.
\newblock {\em IEEE Transactions on Computers}, C-34(10):892--901, Oct 1985.

\bibitem{corona}
Rizal~Mohd Nor, Mikhail Nesterenko, and Christian Scheideler.
\newblock Corona: {A} stabilizing deterministic message-passing skip list.
\newblock In {\em $13^{th}$ International Symposium Stabilization, Safety, and
  Security of Distributed Systems, Grenoble, France, October 10-12, 2011}.

\bibitem{onus2007linearization}
Melih Onus, Andr{\'{e}}a~W. Richa, and Christian Scheideler.
\newblock Linearization: Locally self-stabilizing sorting in graphs.
\newblock In {\em $9^{th}$ Workshop on Algorithm Engineering and Experiments,
  New Orleans, Louisiana, USA, January 6, 2007}.

\bibitem{Perlman85}
Radia~J. Perlman.
\newblock An algorithm for distributed computation of a spanningtree in an
  extended {LAN}.
\newblock In {\em $9^{th}$ Symposium on Data Communications, British Columbia,
  Canada, September 10-12, 1985}, pages 44--53.

\bibitem{plaxton1999}
C.~Greg Plaxton, Rajmohan Rajaraman, and Andr{\'{e}}a~W. Richa.
\newblock Accessing nearby copies of replicated objects in a distributed
  environment.
\newblock In {\em $9^{th}$ Annual ACM Symposium on Parallel Algorithms and
  Architectures, Newport, RI, USA, June 23-25, 1997}, pages 311--320, 1997.

\bibitem{Richa2011}
Andr{\'{e}}a~W. Richa, Christian Scheideler, and Phillip Stevens.
\newblock Self-stabilizing de bruijn networks.
\newblock In {\em $13^{th}$ International Symposium of Stabilization, Safety,
  and Security of Distributed Systems, Grenoble, France, October 10-12, 2011.},
  pages 416--430.

\bibitem{Rowstron2001}
Antony I.~T. Rowstron and Peter Druschel.
\newblock Pastry: Scalable, decentralized object location, and routing for
  large-scale peer-to-peer systems.
\newblock In {\em {IFIP/ACM} International Conference on Distributed Systems
  Platforms Heidelberg, Middleware 2001, Germany, November 12-16, 2001}, pages
  329--350.

\bibitem{Shavitt08}
Y.~Shavitt and T.~Tankel.
\newblock Hyperbolic embedding of internet graph for distance estimation and
  overlay construction.
\newblock {\em IEEE/ACM Transactions on Networking}, 16(1):25--36, Feb 2008.

\bibitem{Supowit83}
Kenneth~J. Supowit.
\newblock The relative neighborhood graph, with an application to minimum
  spanning trees.
\newblock {\em Journal of the ACM}, 30(3):428--448, 1983.

\bibitem{Toussaint80}
Godfried~T. Toussaint.
\newblock The relative neighbourhood graph of a finite planar set.
\newblock {\em Pattern Recognition}, 12(4):261 -- 268, 1980.

\end{thebibliography}

\newpage

\end{document}